\documentclass{article}
\usepackage{graphicx} 
\usepackage{geometry}
\usepackage{listings}
\usepackage{xcolor}
\usepackage{algorithm}
\usepackage{algpseudocode}

\usepackage{amsfonts, amsmath, amssymb, amsthm}
\usepackage{thmtools}
\usepackage{hyperref}
\usepackage{cleveref}

\declaretheorem[name=Theorem,numberwithin=section]{theorem}

\newtheorem{lemma}{Lemma}[section]

\newtheorem{corollary}[theorem]{Corollary}

\crefname{conjecture}{Conjecture}{Conjectures}

\theoremstyle{definition}

\newcommand{\istrut}[2][0]{\rule[- #1 mm]{0mm}{#1 mm}\rule{0mm}{#2 mm}}

\newcommand{\Szemeredi}{Szemer\'{e}di}
\newcommand{\Patrascu}{P{\v{a}}tra\c{s}cu}

\newcommand{\Lovasz}{Lov\'{a}sz}

\newcommand{\Search}{\textsf{Search}}
\newcommand{\Insert}{\textsf{Insert}}
\newcommand{\Delete}{\textsf{Delete}}
\newcommand{\Index}{\textsf{Index}}
\newcommand{\Pos}{\textsf{Pos}}
\newcommand{\Subtree}{\textsf{Subtree}}

\newcommand{\bydef}{\stackrel{\operatorname{def}}{=}}

\newcommand{\child}{\operatorname{child}}

\title{A Simple Analysis of Quadratic Probing\\ and Other Open Addressing Schemes\thanks{This work was conducted while the first and third authors were visiting University of Michigan, Ann Arbor.  Supported by NSF Grant CCF-2446604.}}

\author{Yuhao Guo\\{\small IIIS, Tsinghua University}
\and 
Seth Pettie\\
{\small University of Michigan}
\and
Chengzhang Wan\\
{\small IIIS, Tsinghua University}
}

\date{}

\begin{document}

\maketitle

\begin{abstract}
    In open addressed hashing, \emph{quadratic probing} is attractive for striking a nice balance between having a high locality of reference and a low number of probes per search.  
    However, these are \emph{empirical observations}, 
    not theoretical guarantees.  Indeed, until recently, it was not known whether quadratic probing had constant expected insertion cost under \emph{any} positive load factor $\alpha > 0$, even with uniformly random hash functions.

\medskip 

    In a recent breakthrough---albeit a numerically understated breakthrough---Kuszmaul and Xi~\cite{KuszmaulX24} proved 
    that any fixed offset sequence (including quadratic probing) does, 
    in fact, have constant expected insertion cost for load factors $\alpha \leq 8.9\%$.
    This is well below what we would like to prove, that quadratic probing has constant insertion cost for any load factor 
    $\alpha < 1-\epsilon$ bounded away from 1.

\medskip 

    In this paper, we prove that open addressed hashing with 
    any fixed offset sequence has constant expected insertion cost 
    for load factors up to $35.74\%$, and that for quadratic probing in particular, we can increase the load factor to $37.61\%$.  Our main innovation is 
    a new type of \emph{witness forest} 
    for recording collisions among the probe sequences.
\end{abstract}

\section{Introduction}

An \emph{open addressed hash table} is 
one that stores a dynamic \emph{dictionary} $S\subset [U]$ 
in a flat array with $n$ slots, where a newly inserted element $x\in [U]$ is put in the first available slot indexed by:
\[
h(x,0), h(x,1), h(x,2), \ldots 
\]
Such a scheme supports \emph{searches} and \emph{deletions} (via tombstones~\cite{BenderKK21}) 
in the same time as insertions.
The main design question is: what is the best probe sequence $(h(x,i))_{i\geq 0}$, taking into account factors such as 
locality of reference/cache misses, overhead for hash function evaluation, and quality of the hash functions needed, among 
other factors.  
By default we shall assume that all hash functions are uniformly random functions. 
The canonical probe sequences are
\begin{description}
    \item[Uniform Hashing.] Here $h : [U]\times \mathbb{Z} \to [n]$ is uniformly random, or equivalently,
    there are independent hash functions 
    $h_0,h_1,h_2\ldots : [U]\to[n]$, and 
    \[
    h(x,i) = h_i(x).
    \]

    \item[Double Hashing.] There are two independent 
    hash functions $h_0: [U]\to [n], h_1 : [U] \to [n]-\{0\}$, and
    \[
    h(x,i) = (h_0(x) + i\cdot h_1(x)) \text{ mod } n.
    \]
    
    \item[Linear Probing.] There is a single hash function $h : [U]\to [n]$, and
    \[
    h(x,i) = (h(x) + i) \text{ mod } n.
    \]
    That is, we search sequentially, starting at the random slot 
    $h(x)$.
    \item[Quadratic Probing.] Like linear probing there is one hash function $h : [U]\to [n]$, but the probe sequence is less clustered.
    \[
    h(x,i) = (h(x) + i^2) \text{ mod } n.
    \]
    Some texts use \emph{quadratic probing} to refer to 
    a family of probe sequences parameterized by constants $a,b$.
    \[
    h(x,i) = (h(x) + ai + bi^2) \text{ mod } n.
    \]
    
    \item[Fixed-offset Probing.] There is a single hash function $h : [U]\to [n]$ and a fixed permutation 
    $r_0=0$, $r_1$, $r_2$, $r_3,\ldots,r_{n-1}$ of $[n]$.
    The probe sequence is then
    \[
    h(x,i) = (h(x) + r_i) \text{ mod } n.
    \]
    Fixed-offset probing subsumes linear probing, 
    and \emph{basically} subsumes quadratic probing.  
    The issue is that 
    $(i^2 \text{ mod } n)_{0\leq i\leq n-1}$ and 
    $(ai + bi^2 \text{ mod } n)_{0\leq i\leq n-1}$ 
    are not necessarily permutations
    of $[n]$, though their first $\Omega(\sqrt{n})$ terms are 
    distinct, which is often just as good.\footnote{See Hopgood and Davenport~\cite{HopgoodD72} and Batagelj~\cite{Batagelj75} for the design of quadratic probing sequences for which the first $n/2$ or $n-O(1)$ or $n$ probes are guaranteed to be distinct.}
\end{description}

Linear probing was proposed in the 1950s~\cite{Peterson57} and famously analyzed by Knuth~\cite{Knuth63}, Schay and Spruth~\cite{SchayS62}, and Konheim and Weiss~\cite{KonheimW66} in the 1960s.  These and later analyses~\cite{MendelsonY80,FlajoletPV98,Janson01} assume uniformly random hash functions.  With load factor $1-\epsilon$,
the expected number of probes for the next insertion
is $\frac{1}{2}(1+\epsilon^{-2})$~\cite{Knuth63},
which is much worse than the $\epsilon^{-1}$ of uniform 
probing.  Nonetheless, linear probing is popular in practice due to its simplicity and high locality of reference.
Pagh, Pagh, and Ruzic~\cite{PaghPR09} proved that the expected $O(\epsilon^{-2})$ bound for every insert and 
query could be achieved with $5$-wise independence,
and \Patrascu{} and Thorup~\cite{PatrascuT16} proved that 4-wise independence is insufficient.\footnote{In particular, 
4-wise independence, but not 3-wise independence, suffices to guarantee that 
$(1-\epsilon)n$ insertions take 
$O_\epsilon(n)$ time in total~\cite{PaghPR09}, 
but to get an $O(\epsilon^{-2})$ bound 
for each insertion individually,
5-wise independence but not 4-wise independence suffices~\cite{PatrascuT16}.
The upper bounds hold for \emph{any} 
$k$-wise independent hash family, and the lower bounds hold for a \emph{worst case} $k$-wise independent hash family.  Some non-5-wise independent classes of hash 
functions~\cite{PatrascuT12,PatrascuT13,Bercea0KHT23} 
still guarantee $O(\epsilon^{-2})$ expected insertion time.}

\medskip 

Double hashing was proposed by Balbine~\cite{Balbine68} 
and Bell and Kaman~\cite{BellK70}.  
Guibas and \Szemeredi~\cite{GuibasS76,GuibasS78} 
showed that it has constant expected insertion time 
for load factors up to
$\alpha \leq 0.319$~\cite{GuibasS76,GuibasS78}, 
and Lueker and Molodowitch~\cite{LuekerM93} eventually 
proved that the same holds for all load factors $\alpha = 1-\epsilon$ 
bounded away from 1.

\medskip 

Quadratic probing was proposed by Maurer~\cite{Maurer68} 
in the 1960s.  
It combines some of the good features
of linear probing, namely locality of reference and hence few cache misses, as well as 
uniform/double hashing, 
which create less clustering due to 
their more scattered probe sequences.
Despite the long history of hashing 
and algorithmic analysis, 
essentially nothing was known about quadratic probing \emph{theoretically} until very recently.
More than 55 years after it was proposed, 
Kuszmaul and Xi~\cite{KuszmaulX24} finally 
proved that under any load factor  
$\alpha \leq 0.089$ ($8.9\%$ of the table), 
quadratic probing, and in fact any fixed-offset 
probe sequence,
has constant expected insertion 
time.\footnote{Specifically, 
Kuszmaul and Xi~\cite{KuszmaulX24} 
proved that any fixed probe sequence 
has constant expected insertion time
for any load factor
$\alpha < \alpha_{\text{KX}}$, 
where 
$\beta_{\text{KX}}\alpha_{\text{KX}}e^{1-\alpha_{\text{KX}}}=1$
and $\beta_{\text{KX}} \approx 4.51$ is the minimum value
attained by 
$f(u)=u^{-1}\prod_{i\geq 1}(1+u^i)$.}
They also showed that the blocked variant of 
quadratic probing---a hybrid between it and linear 
probing---can achieve load factors close to 1 for suitable large block size.\footnote{Several modern libraries use blocked variants of quadratic probing, e.g., 
\url{https://abseil.io}
and
\url{https://docs.rs/hashbrown/latest/hashbrown/}.}

\subsection{Our results}

The Kuszmaul-Xi~\cite{KuszmaulX24} analysis is based on analyzing \emph{witness strings}, which encode information about the collision pattern of elements as they proceed along their 
probe sequences.  However, they are not as efficient as they could be.

\medskip 

In this paper we introduce a simple \emph{witness forest} 
that efficiently encodes similar information. 
Applying this definition to an arbitrary fixed-offset 
sequence $(r_0,r_1,r_2,\ldots)$, we prove that constant expected insertion time is achievable for any load $\alpha \leq 35.74\%$.

\begin{theorem}\label{thm:mainthm}
Define $\alpha^* \approx 0.357403$ to be the unique solution of $\frac{4}{e}\alpha e^{1-\alpha}=1$ in the range $[0,1]$. 
For any load factor $\alpha < \alpha^*$, 
and any permutation $(r_0,\ldots,r_{n-1})$,
the cost of inserting a new element is dominated by a geometric random variable with mean $O(1)$.
\end{theorem}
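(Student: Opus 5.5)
The plan is to bound $\Pr[\text{insertion of } x \text{ takes more than } t \text{ probes}]$ by $C\cdot\rho^t$ for some constant $\rho<1$ whenever $\alpha<\alpha^*$. This gives geometric domination with $O(1)$ mean.

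\textbf{Setting up.} We use the standard fact that the final set of occupied slots does not depend on insertion order. So we may regard the table as built by inserting $m=\alpha n$ elements with independent uniform home slots $h(y)$, and then insert $x$.

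If $x$ needs more than $t$ probes, the slots $h(x)+r_0,\dots,h(x)+r_{t-1}$ are all occupied. Each occupied slot $s$ holds a unique element $y$, sitting at some probe index $j$ with $h(y)+r_j=s$. Then all of $h(y)+r_0,\dots,h(y)+r_{j-1}$ were occupied when $y$ was inserted.

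\textbf{The witness forest.} We record this recursively. The roots are the $t$ slots probed by $x$. A node for a slot occupied by $y$ at index $j$ gets $j$ children, namely $y$'s earlier probe slots. We explore in a fixed canonical order (e.g.\ BFS, children by increasing offset index), and any slot or element already seen is not re-expanded. This truncation yields a finite ordered forest in which distinct nodes correspond to distinct slots and distinct elements. We stop the exploration once the forest reaches some size $k\ge t$ (or exhausts itself).

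The forest shape is an ordered forest, encodable by its child-count sequence, so there are at most $4^{k}$ shapes with $k$ nodes. Given $h(x)$ and the shape, every node's slot is determined. This is where the fixed offsets $r_i$ enter: the slot of a child is $h(y)+r_i$, and $h(y)=s-r_j$ is fixed by the parent's data. So the shape pins down, for each node, the home slot $s-r_j$ of the element that must sit there.

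\textbf{The probability bound.} We then take a union bound. First choose an ordered tuple of $k$ distinct elements out of $\alpha n$. The probability that each has its prescribed home slot is at most $n^{-k}$. This gives $(\alpha n)^k n^{-k}=\alpha^k$ per shape, hence a bound of $(4\alpha)^k$. That already gives $\alpha<1/4$. The target $\frac4e\alpha e^{1-\alpha}=4\alpha e^{-\alpha}<1$ is sharper by a factor $e^{-\alpha}$ per node.

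\textbf{Main obstacle: extracting the factor $e^{-\alpha}$.} I expect this to be the hardest step. My first attempt will be to also charge, for each node, the event that no \emph{other} element conflicts with the recorded structure. Concretely, in the canonical construction we choose, for each occupied slot, the element hashing there under a fixed rule (e.g.\ the smallest-index one among those with the right home slot), and we require that the remaining elements avoid a certain set of $k$ "home" positions. Under the Poisson approximation (counts per home slot i.i.d.\ Poisson$(\alpha)$), each node then contributes $\Pr[\text{Poisson}(\alpha)\ \text{has the required pattern}]\approx \alpha e^{-\alpha}$ rather than $\alpha$. Elements sharing a home slot would be handled by a generating-function computation: a node with several siblings from the same home contributes $\alpha^a e^{-\alpha}/a!$, and the $a!$ is absorbed by the ordering freedom in the shape count.

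Finally, de-Poissonization, i.e.\ passing from the binomial to the Poisson model, costs only a constant factor. Summing over $k\ge t$ gives a bound of
\[
C\sum_{k\ge t}(4\alpha e^{-\alpha})^k=O\bigl((4\alpha e^{-\alpha})^t\bigr),
\]
which is geometric precisely when $\alpha<\alpha^*$.
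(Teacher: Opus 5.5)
There is a genuine gap, and it is at the step you flagged. A factor $e^{-\alpha}$ per node can only come from an \emph{isolation} property: the event being bounded must force every element outside the recorded structure to avoid hashing into the structure's slots. Your forest is a \emph{backward} closure, since a node is expanded only into the slots its occupant probed before settling. Backward closures are not isolated. An element $z$ inserted after an explored element $y$ may hash to $h(y)$ or to $\Pos(y)$, bounce, and settle in a slot that no element of the structure ever probes.

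For example, take $r_1=1$, $r_2=4$, and insert $y_0,y_1,z,x$ with $h(y_0)=h(x)=0$ and $h(y_1)=h(z)=1$. Then $y_0\to 0$, $y_1\to 1$, $z\to 2$, and $x$ probes $0,1$ and lands at $4$. Your structure is $\{y_0,y_1\}$, yet $z$ lies outside it and hashes to $h(y_1)=\Pos(y_1)$.

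So the long-insertion event does not imply that the remaining elements avoid the $k$ home positions, and a node's bin need not contain exactly the recorded elements. What you can charge per node is essentially the occupancy factor, about $\alpha$, not $\alpha e^{-\alpha}$. Your Poisson and generating-function computation evaluates an event that is not implied, so as written the argument yields at best $\alpha<1/4$.

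The paper's remedy is to build the witness structure \emph{forward} in time. When $u$ is inserted, the entire current tree of each element it collides with is hung under $u$, labeled by the element that was hit. In the example, $y_1$'s tree is absorbed by $z$, and then $z$'s tree by $x$, so $z\in\Subtree(x)$. The tree rooted at $m+1$ is therefore closed: by \Cref{cor:hashrange}, any $v$ with $h(v)\in\Pos(V')$ lies in $V'$. This is exactly the source of the $n(n-s)^{m+1-s}$ count in \Cref{lem:numseq}, and hence of the $e^{-\alpha}$ per node.

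The price of this larger structure is that the labels must be counted. The key computation in \Cref{lem:numgenerable} is that the $\frac1s\prod_v|\Subtree(v)|$ labelings cancel the $\prod_v 1/|\Subtree(v)|$ fraction of heap-ordered identifications. This leaves $C_{s-1}(s-1)!$ trees, so the $4^s$ survives together with $e^{-\alpha s}$.

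Two smaller points. First, your $4^k$ shape count is also unjustified. A node's home slot $s-r_j$ requires its full probe index $j$. Under the ``already seen'' rule, $j$ can exceed the node's number of new children, and under the size-$k$ cutoff it is not recorded at all for frontier nodes. The paper avoids this via \Cref{lem:forest-properties}(3) and \Cref{lem:relativepos}.

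Second, order-independence of the occupied set holds for linear probing but fails for general offsets. With $r_1=1$, $r_2=4$ and $h(a)=h(b)=0$, $h(c)=1$, the orders $a,b,c$ and $c,a,b$ occupy $\{0,1,2\}$ and $\{0,1,4\}$ respectively. You do not actually need this fact, since the insertion order is fixed, so it should simply be dropped.
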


\Cref{cor:quadprobing} only relies on the fact that the first $\Omega(\sqrt{n})$ elements of the probe sequence are distinct mod $n$.  The $\exp(-\Omega(\sqrt{n}))$ failure probability can be removed by choosing $n$ and the probe sequence more carefully; see~\cite{HopgoodD72,Batagelj75}.

\begin{corollary}\label{cor:quadprobing}
For any load factor $\alpha < \alpha^*$, with probability $1-\exp(-\Omega(\sqrt{n}))$, quadratic probing supports insertions in expected constant time.
\end{corollary}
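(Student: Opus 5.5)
The plan is to reduce \Cref{cor:quadprobing} to \Cref{thm:mainthm} by truncation. Fix $m = c\sqrt{n}$ for a small constant $c>0$ (for $n$ prime, the first $\lceil n/2\rceil$ values $i^2 \bmod n$ are distinct; in general one checks that $ai+bi^2 \bmod n$ is injective on $0\le i<m$ for suitable $c$, which is the only structural fact we use). Extend the prefix $(r_0,\ldots,r_{m-1}) = (0, 1, 4, \ldots, (m-1)^2 \bmod n)$ arbitrarily to a full permutation $(r_0,\ldots,r_{n-1})$ of $[n]$. Quadratic probing and this fixed-offset scheme produce identical probe sequences for the first $m$ probes of every element. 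So any execution in which no insertion ever uses more than $m$ probes is identical under the two schemes.

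First, I would invoke \Cref{thm:mainthm} for the extended permutation with load $\alpha<\alpha^*$. Each insertion's cost is then dominated by a geometric random variable with mean $O(1)$. Consequently the probability that a given insertion needs more than $m$ probes is at most $q^{m}$ for some constant $q<1$, which is $\exp(-\Omega(\sqrt n))$.

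Second, I would take a union bound over the at most $\alpha n$ insertions that bring the table to its current state (deletions via tombstones only add occupied slots and are handled the same way, as the load including tombstones stays below $\alpha$). This shows that with probability $1-n\exp(-\Omega(\sqrt n)) = 1-\exp(-\Omega(\sqrt n))$, every insertion in the fixed-offset scheme uses fewer than $m$ probes. On this event the two processes coincide step for step. The table built by quadratic probing is then exactly the one built by the fixed-offset scheme, and the next insertion behaves identically as long as it terminates within $m$ probes.

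Third, for the expected cost of a new insertion on this good event, I would use that its cost under the fixed-offset scheme is geometrically dominated with mean $O(1)$. The contribution from runs longer than $m$ probes is where quadratic probing could misbehave, for example by cycling. That part is exactly what the "with probability $1-\exp(-\Omega(\sqrt n))$" qualifier absorbs.

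The main obstacle is this last point: stating precisely what "expected constant time" means when quadratic probing might never terminate. I would define the good event to also require that the new insertion finishes within $m$ probes, and observe that it fails with probability $\exp(-\Omega(\sqrt n))$. Conditioning on it changes the expectation of a geometrically dominated variable by only a $1+o(1)$ factor, so the conditional expected cost is $O(1)$. A second care point is checking that the hash randomness used is the same single $h$, so that the coupling is exact. It is, since fixed-offset probing uses one uniformly random $h$, just like quadratic probing.
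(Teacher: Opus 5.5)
Your proof is correct and matches the paper's justification, which the paper only sketches: the corollary relies only on the first $\Omega(\sqrt{n})$ offsets being distinct, and your coupling with a permutation extending that distinct prefix (plus the geometric tail and a union bound over the $\alpha n$ insertions) is an explicit way to apply \Cref{thm:mainthm} as a black box. One small fix: drop the parenthetical about tombstones, since \Insert{} treats tombstones as empty and \Cref{thm:mainthm} covers only insertion-only sequences, so that claim is unsupported---though the corollary does not need it.
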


The $\frac{4}{e}\alpha e^{1-\alpha}$ expression comes from an enumeration of labeled trees that includes all realizable witness trees, but many trees that are \emph{unrealizable} for the specific offset sequence $(r_0,r_1,\ldots)$.
We develop a method using generating functions to give arbitrarily good upper bounds
on the number of \emph{realizable} witness trees for any 
particular fixed-offset sequence $(r_0,r_1,r_2,\ldots)$.
Applied to basic quadratic probing, where $r_i=i^2$,
this method shows that~\Cref{thm:mainthm} and \Cref{cor:quadprobing} can go slightly higher, 
to load factor $37.61\%$. 
The use of generating functions is not new in the analysis
of linear probing~\cite{MendelsonY80,FlajoletPV98,Janson01}
or even quadratic probing~\cite{KuszmaulX24},
but our use of generating functions is more elementary.
We combine exact counting of small realizable 
trees and a general recurrence for larger trees 
to yield substantially better upper bounds on the 
number of realizable trees.

\subsection{Related Results}

The first sentence of this paper defined ``open addressing'' in the most restrictive way possible.  
In some papers open addressing simply means that the elements of $[U]$ are stored in an array, with no restrictions on the insertion, deletion, or search algorithms.  Other papers start with the restrictive definition, but explore the consequences of 
explicitly dropping one or more of its axioms.  
For example, 
\begin{itemize}
    \item What if the insert algorithm is \emph{not obliged} to put $x\in [U]$ in the \emph{first} available slot in its probe sequence?
    \item What if we can rearrange elements after insertion?
    \item What if we can add tombstones~\cite{BenderKK21}
    that \emph{do not} correspond to deleted elements?

    \item What if there is a possibility of failure, where there is no feasible assignment of elements to array locations?
\end{itemize}

Because quadratic probing is so poorly understood, we focus only on expected insertion time, which is an upper bound on expected search time.  However, most prior work
focuses on insertion, deletion, and search times, 
under worst case, worst case expected, 
and average expected measures, 
and explores tradeoffs between these measures, 
sometimes differentiating between successful searches ($x\in S$) and unsuccessful searches ($x\not\in S$).
A good starting place for discussing these fine distinctions is Yao's~\cite{Yao85} lower bound from 1985.

\medskip 

Uniform hashing guarantees that the average time of 
a successful search, over all $x\in S$, is $O(\log\epsilon^{-1})$ under load factor 
$\alpha = 1-\epsilon$,\footnote{We shall assume for simplicity that $\epsilon \leq 1/2$.} 
Yao~\cite{Yao85} proved that no open addressed scheme in the restrictive sense can do better, that is, if an insert places the element in the first available slot in its probe sequence, and never moves it thereafter.

Farach-Colton, Krapivin, and Kuszmaul~\cite{Farach-ColtonKK24} illustrated the fragility of Yao's lower bound to the modeling assumptions.  
First, they gave a scheme (\emph{elastic hashing}) 
with $O(1)$ average query time and $O(\log\epsilon^{-1})$ worst case expected query time
in which elements are not necessarily inserted in the first available slot.  Second, they showed that
even in the restricted open addressing model, the worst case expected query time of uniform hashing could be improved, from $O(\epsilon^{-1})$ to $O(\log^2\epsilon^{-1})$ with \emph{funnel hashing}, 
which inserts every element into the first available slot in its probe sequence.

The idea of rearranging elements after 
insertion goes back at least to 
the \emph{ordered hash tables} of 
Amble and Knuth~\cite{AmbleK74} 
and \emph{Robin Hood hashing} of 
Celis, Larson, and Munro~\cite{CelisLM85}.  
When attempting to insert $x$ into position 
$l = h(x,i)$, if the element $y$ at position $l$
has $l = h(y,j)$ then we may wish to swap $x$ and $y$
and continue with the next position in $y$'s probe sequence $h(y,j+1)$.  The \emph{Robin Hood} rule~\cite{CelisLM85} swaps $x$ and $y$ when $i>j$ 
while the \emph{anti-Robin Hood} rule~\cite{HuKLWYZ26} swaps when $i<j$.\footnote{Ties can be broken arbitrarily or via some consistent rule.  A different formulation of the Robin Hood rule~\cite{HuKLWYZ26}, which often has the same outcome, is that we should prioritize $x$ or $y$ according to the distance from $l$ to $h(x)$ or $h(y)$.}
In linear probing, the Robin Hood rule does not change the 
average search time, but it does reduce its variance, balancing the search times of elements inserted early and late.
Using the full power of rearranging elements, 
Bender, Kuszmaul, and Zhou~\cite{BenderKZ24}
gave optimal bounds under open addressing with 
probe sequences $h(x,i)$.  At load factor $\alpha=1-\epsilon$, their scheme has $O(1)$ expected search time,
and $O(\log\log\epsilon^{-1})$ expected time for insertion and deletion, which is optimal under these constraints.  
It even works at load factor $1$, with $O(\log\log n)$ expected insertion/deletion time.

Bender, Kuszmaul, and Kuszmaul~\cite{BenderKK21} 
demonstrated that Knuth's $\Theta(\epsilon^{-2})$
expected insertion time at load $\alpha = 1-\epsilon$
is perhaps too pessimistic.  If the hash table
is subject to a long sequence of insertions and deletions, but never exceeds load factor $1-\epsilon$,
then \emph{graveyard hashing}~\cite{BenderKK21}, a small modification to linear probing, achieves expected time $\tilde{O}(\epsilon^{-1})$ per operation.  This was later refined to $\Theta(\epsilon^{-1}\log^{3/2}\epsilon^{-1})$
by Braverman and Kuszmaul~\cite{BravermanK24}.
Graveyard hashing implements deletes with \emph{tombstones}, which have an anti-clustering effect,
and occasionally rebuilds the table, inserting artificial tombstones into long runs.

Very recently Hu, Kuszmaul, Liang, Walzer, Yu, and Zhou~\cite{HuKLWYZ26} analyzed ``smoothed'' variants
of quadratic probing. Rather than have a single fixed-offset sequence $(r_0,r_1,\ldots)$, each element $x\in [U]$ is assigned a sorted sequence $(r_0(x),r_1(x),r_2(x),\ldots)$ 
by independently including each integer $\ell$ in the sequence with probability $1/\sqrt{\ell}$.
They prove that if elements are shifted 
according to a Robin Hood rule, the expected
insertion and query times are $\Theta(\epsilon^{-3/2})$
and $O(\epsilon^{-1/2})$, respectively,
but under an anti-Robin Hood rule, the expected time 
bounds improve to $\Theta(\epsilon^{-1})$ and $O(\log\epsilon^{-1})$, respectively.\footnote{Note that in the case of linear probing, the set of occupied table locations is invariant under any Robin Hood or anti-Robin Hood type reordering rule.  Under a quadratic probing-type regime, however, the set of occupied locations can look quite different using first-come first-served, Robin Hood, or anti-Robin Hood rules.}  
They also prove that with a single random quadratic 
probing type offset sequence $(r_0,r_1,r_2,\ldots)$, 
where $\mathbb{E}(r_i) \approx i^2$, and an anti-Robin Hood rule,
the expected time for insert and search are $O(\epsilon^{-1})$ and $O(\log\epsilon^{-1})$, 
respectively.  
These results do not imply anything about 
the ``standard'' quadratic probing sequence 
$(r_i = i^2)_{i\geq 0}$, nor any quadratic probing-type discipline without any (anti-)Robin Hood-type rule.

\medskip 
Cuckoo hashing~\cite{PaghR04} 
can be considered an open addressing scheme, where the probe sequence of every element has length 2, and elements can be moved during inserts.  The downsides of cuckoo hashing are (1) the possibility of no feasible assignment of elements to array locations, 
and (2) a maximum load factor of $1/2-\epsilon$.  
Downside (1) has been mitigated by extending every probe sequence with a constant sized ``stash'' of shared locations; 
see Kirsch, Mitzenmacher, and Wieder~\cite{KirschMW09}.
Downside (2) has been mitigated by allowing more elements per cell and/or expanding the number of cells where an element is allowed to be placed; see~\cite{FotakisPSS05,DietzfelbingerW07,CainSW07,FernholzR07,LehmanP09,DietzfelbingerGMMPR10,DietzfelbingerMR11,FriezeM12,Lelarge12,FountoulakisP12,FountoulakisPS13,FountoulakisKP16,MitzenmacherPW18,FriezeJ19,Walzer22,Walzer23,BellF24,Walzer25,KuszmaulM25}.  Some of these works also mitigate downside (1).

\medskip 

Witness trees and other witness structures
have been used for a variety of probabilistic analyses.  A small sample outside of hashing 
includes witness trees for the constructive \Lovasz{} local lemma~\cite{MoserT10,ChungPS17}, 
witness trees for power-of-$d$-choices
allocation problems~\cite{SchickingerS00},
witness trees for online coloring avoiding monochromatic subgraphs~\cite{MutzeRS14}, and witness trees
for \emph{graph shattering}-type parallel 
symmetry breaking algorithms~\cite{RubinfeldTVX11,BarenboimEPS16}.
Within hashing we see witness strings
for quadratic probing~\cite{KuszmaulX24},
and witness graphs in the analysis
of cuckoo hashing~\cite{PaghR04}.

\subsection{Organization}

\Cref{sect:prelim} formally defines the 
\Insert, \Delete, and \Search{} operations
in hash tables using a 
fixed-offset probe sequence.
\Cref{sect:witness} defines the 
\emph{witness forest} and uses it to analyze
any fixed-offset probe sequence, leading
to results for load factors up to $35.74\%$.
\Cref{sect:quadratic-probing} gives
a method for improving this bound
for any \emph{specific} fixed-offset sequence,
and shows that quadratic probing in particular
($(r_i=i^2)_{i\geq 0}$) works up to load factor 
$37.61\%$.  
Appendix \ref{sect:linear-probing} applies 
the witness tree framework to prove that linear probing has expected constant time operations for any load factor $\alpha = 1-\epsilon$ bounded away from 1.

\medskip 

We conclude in \Cref{sect:conclusion} 
with some remarks.

\section{Preliminaries}\label{sect:prelim}

Let $A = A[0],\ldots,A[n-1]$ be the hash table, 
$h : U\rightarrow [n]$ be a uniformly random hash function, 
and $r_0,r_1,\ldots,r_{n-1}$ be a permutation of $\{0,1,\ldots,n-1\}$.
Without loss of generality 
assume $r_0=0$.
Every slot in $A$ may contain an element of $U$, 
a \emph{tombstone} ($\blacksquare$), or be empty ($\perp$).
\Insert, \Delete, and \Search{} are handled as follows:
\begin{description}
    \item[\Insert$(x)$ :] Find the minimum $i$ such that 
    $A[h(x)+r_i \text{ mod } n]\in\{\perp,\blacksquare\}$, and then set $A[h(x)+r_i \text{ mod } n]\gets x$.
    \item[\Delete$(x)$ :] Find the index $i$ such that $A[h(x) + r_i \text{ mod } n] = x$, then  
    set $A[h(x) + r_i \text{ mod } n]\gets \blacksquare$.
    \item[\Search$(x)$ :] Find the minimum $i$ such that $A[h(x)+r_i \text{ mod } n] \in \{x,\perp\}$.  Return $A[h(x)+r_i \text{ mod } n]$.
\end{description}
In other words, tombstones behave like elements in a \Search{}
or \Delete{},
and like empty cells in an \Insert.

We consider the situation after $m=\alpha n$ insertions have been performed, and analyze
the expected time for the $(m+1)$th insertion.
Without loss of generality, we may assume the sequence of key insertions 
$x_1,\ldots,x_{m+1}$ is just $1,2,\ldots,m+1$,
i.e., $x_i=i$.
For $x\in V \bydef \{1,\ldots,m+1\}$,
define $\Index(x)$ to be such that $A[h(x) + r_{\Index(x)} \text{ mod } n]=x$ and $\Pos(x) = h(x)+r_{\Index(x)} \text{ mod } n$.  
The cost of \Insert$(x)$ is $\Index(x)+1$.

\section{The Witness Forest}\label{sect:witness}

We construct a \emph{witness forest} that represents 
the sequence of probes made when inserting keys $1,2,\ldots,m+1$.  The witness forest $F$ consists
of rooted, ordered trees. 
Each node in $F$ is 
identified with an element of $V$, 
and also assigned a \emph{label} of an 
element, or left unlabeled ($\bot$).  
Let $\ell : V\to V\cup\{\bot\}$ 
be the labeling function.

The witness forest is constructed as follows.  
(See \Cref{fig:witness-forest} 
for one execution of this algorithm on a small example.)
Upon \Insert$(u)$ we
\begin{itemize}
\item Create a new node identified with $u$ and label $\ell(u) = \bot$. If $\Pos(u) = h(u)$, that is, $u$ was placed directly in the first location in its probe sequence, then $u$ is a singleton tree in $F$.

\item Otherwise, $\Index(u) > 0$. 
For $j$ from $0$ to $\Index(u)-1$,
\begin{itemize}
\item Let $v$ be the element in position $\Pos(v) = h(u)+r_j \text{ mod } n$, and $w$ be the root of $v$'s tree in $F$.
\item If $v$ is already a descendant of $u$ then do nothing.
\item Otherwise, make $w$ the rightmost child of $u$, 
and assign it the label $\ell(w) \gets v$.
\end{itemize}
\end{itemize}

\begin{figure}
    \centering
    \includegraphics[width=1\linewidth]{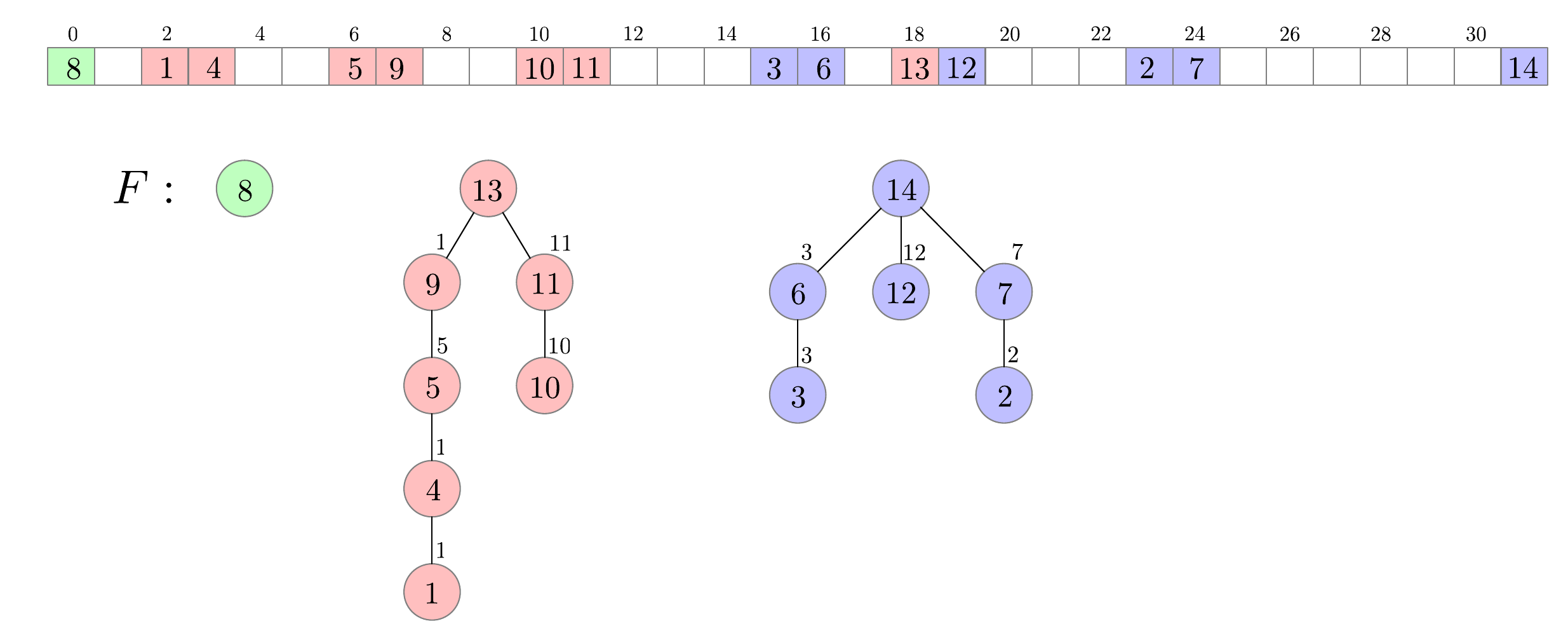}
    \caption{Top: the hash table $A[0..31]$ after inserting $1,2,\ldots,14$ using quadratic probing, where $r_i = i^2$.  Bottom: the corresponding witness forest,
    where the label $\ell(u)$ is written just above the node $u$.  The hash values in this example are:\\
    $\begin{aligned}
    \istrut{5}\qquad\qquad h(8) &= 0, & \qquad h(1)=h(4)=h(5)=h(13) &=2,     & h(2)=h(7) &= 23,\\
    &&h(9) &= 6, &     \qquad h(3)=h(6)=h(14) &= 15,\\
    &&h(10)=h(11) &= 10,    & h(12) &= 19.\\
    \end{aligned}$
    }
    \label{fig:witness-forest}
\end{figure}

\subsection{Structural Properties of Witness Trees}

Define $\Subtree(x,F)$ to be the subtree of $x$ in forest $F$.
We will drop $F$ when it is known from context, and also use
$\Subtree(x,F)$ to refer to the set of 
elements in $V$ identified with the 
vertices in the subtree of $x$.

\begin{lemma}\label{lem:forest-properties}
Let $F,\ell$ be the witness forest and labeling function 
constructed after inserting
elements $1,2,\ldots,m+1$ in that order.

\begin{enumerate}
    \item For any $u,v\in V$, 
    if $u$ is a strict ancestor 
    of $v$, then $u>v$.\label{item:ancestor-property}
    \item A node $u$ is unlabeled ($\ell(u)=\bot$) if and only if $u$ is a root in $F$.  If $\ell(u)\neq \bot$ then $\ell(u)\in \Subtree(u)$.\label{item:label-property}
    \item For any $j\in [0,\Index(u))$, 
    if $v\in V$ is at position $\Pos(v) = h(u) + r_j \;\mathrm{ mod }\; n$, then $v\in \Subtree(u)$.
    As a consequence, 
    $\Index(u) < |\Subtree(u)|$.
    \label{item:collision-property}
\end{enumerate}
\end{lemma}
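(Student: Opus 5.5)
All three parts follow by induction on the number of insertions. The invariant is that after inserting $1,\ldots,t$, the forest $F_t$ on $\{1,\ldots,t\}$ satisfies the three properties with $V$ replaced by $\{1,\ldots,t\}$. Two observations about the construction are used throughout. First, during $\Insert(u)$ the only structural change is that some current roots become children of the new node $u$. No edge is ever removed, and no edge is added below an existing non-root node. Second, once a node becomes a non-root it never becomes a root again. Consequently, subtrees of existing nodes never shrink and only grow by acquiring new ancestors, and the table positions $\Pos(v)$ of previously inserted elements never change (we only consider insertions, so no tombstones arise).

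For part~\ref{item:ancestor-property}, note that every edge created during $\Insert(u)$ goes from $u$ to a root $w$ of a tree of $F_{u-1}$, so $w<u$. Every edge therefore goes from a larger parent to a smaller child, and strict ancestry is the transitive closure of these edges, so it is decreasing.

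For part~\ref{item:label-property}, a node receives a label exactly when it is attached as a child, and the label is never changed afterwards, because only roots are ever attached. Hence unlabeled nodes are exactly the roots. When $w$ receives label $v$, $w$ is the root of $v$'s tree at that moment, so $v\in\Subtree(w)$. Subtrees of $w$ never change after $w$ becomes a non-root, since only roots gain new parents or children. Hence $v$ remains in $\Subtree(w)$.

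For part~\ref{item:collision-property}, fix $u$ and $j<\Index(u)$. The slot $h(u)+r_j \bmod n$ was occupied when $u$ was inserted, by some $v<u$, and $v$ never moves. At step $j$ of the construction, either $v$ is already a descendant of $u$, or the root $w$ of $v$'s tree is made a child of $u$, which also makes $v$ a descendant of $u$. In either case $v\in\Subtree(u)$ at the end of $\Insert(u)$, and later steps never remove descendants, so this persists. For the consequence, the offsets $r_0,\ldots,r_{\Index(u)-1}$ are distinct because $(r_i)$ is a permutation, so the positions $h(u)+r_j$ are distinct and are occupied by distinct elements $v_j$. These $\Index(u)$ elements are all in $\Subtree(u)$ and all differ from $u$, since $v_j<u$. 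Hence $|\Subtree(u)|\ge \Index(u)+1$.

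The only delicate point is the persistence claim: labels and subtree memberships established at one time must survive later insertions. This is exactly the "only roots get attached" observation, so it must be stated explicitly before the three parts.
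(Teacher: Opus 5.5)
Your proof is correct and takes the same route as the paper, which simply says that all three claims follow directly from the construction. You spell out the facts that sentence leaves implicit: only current roots are ever attached, and only to the node being inserted; edges are never removed; elements never move when there are only insertions; and the $r_j$ are distinct, so the $\Index(u)$ blocking elements are distinct. One harmless phrasing slip is that a node's subtree is actually frozen once its own insertion finishes, since it is the tree containing the node that grows by acquiring new ancestors; this only strengthens the persistence claims you rely on.
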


\begin{proof}
    Claims (1), (2), and (3) follow directly from the construction.
\end{proof}

\Cref{cor:hashrange} is a straightforward consequence of \Cref{lem:forest-properties}.

\begin{corollary}\label{cor:hashrange}
Let $u,v,w \in V$ be elements such that
$w\in \Subtree(u)$ 
and $\Pos(w) = h(v)$.
Then $u$ and $v$ are related, 
that is,
either $v\in \Subtree(u)$ 
or $u\in \Subtree(v)$.
\end{corollary}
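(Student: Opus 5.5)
We prove \Cref{cor:hashrange} by following the insertion of $v$. Since $\Pos(w)=h(v)=h(v)+r_0$, the element $w$ occupies the first slot of $v$'s probe sequence when $v$ is inserted, provided $w$ was inserted before $v$. We therefore split on whether $w<v$, $w=v$, or $w>v$, recalling that keys are inserted in the order $1,2,\ldots,m+1$.

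\emph{Case $w<v$.} When $v$ is inserted, slot $h(v)$ is occupied by $w$. Deletions are not modeled in $V$; the slot could hold a tombstone, but we only insert. Hence $\Index(v)>0$, and index $j=0$ lies in $[0,\Index(v))$. By \Cref{lem:forest-properties}(\ref{item:collision-property}), $w\in\Subtree(v)$. We also have $w\in\Subtree(u)$ by hypothesis. In a forest, the ancestors of $w$ form a single root-to-$w$ path, so $u$ and $v$ both lie on it, and one of them is an ancestor of the other. This uses that subtree membership in the final forest $F$ is what the lemma asserts.

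\emph{Case $w=v$.} Then $v\in\Subtree(u)$ directly.

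\emph{Case $w>v$.} This would mean $\Pos(w)=h(v)$ while $v$, inserted earlier, has either already occupied slot $h(v)$ (so $\Pos(v)=h(v)$) or found it occupied by an even earlier element. In the first case, since the slot is never vacated, $w$ cannot be placed there, a contradiction. In the second case, the earlier occupant $w'$ of that slot is the final occupant, because only insertions occur. Then $\Pos(w')=h(v)$ with $w'<v$, so $w'=w$, contradicting $w>v$. So this case cannot occur.

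The main subtlety is confirming that the forest only grows: nodes are re-attached only at roots, so ancestor relations established earlier persist in the final $F$. This monotonicity is what makes \Cref{lem:forest-properties}(\ref{item:collision-property}), stated for the final $F$, applicable. The rest follows from the path structure of ancestors in a rooted forest.
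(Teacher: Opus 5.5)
Your proof is correct and follows the paper's argument. The case $w=v$ is trivial. Otherwise $w<v$, so $w$ occupies slot $h(v)=h(v)+r_0$ when $v$ is inserted, which puts $w\in\Subtree(v)$. Then $u$ and $v$ both lie on the ancestor path of $w$ and are related. Your explicit exclusion of $w>v$ (occupied slots are never vacated) and your remark that ancestor relations persist simply spell out steps the paper leaves implicit.
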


\begin{proof}
If $w=v$ then the claim holds trivially.  
If $w\neq v$ then $w<v$
and upon $\Insert(v)$, 
$w$'s subtree will be made a child of $v$
in $F$.  Thus, $u$ and $v$ are both 
ancestral to $w$, and hence related.
\end{proof}

\Cref{lem:relativepos} shows that 
given a tree $T$ in a witness forest,
we can recover the \emph{relative} 
hash values for all keys in $T$,
as well as their relative positions 
in the hash table $A$.

\begin{lemma}\label{lem:relativepos}
Any labeled tree $T$ in the witness forest $F$ 
uniquely determines $\Index(u)$ for all $u\in T$
and the \emph{relative} positions 
$h(u)-h(v)$, $\Pos(u)-\Pos(v)$, and 
$h(u)-\Pos(v)$ (modulo $n$) for all $u,v\in T$.
\end{lemma}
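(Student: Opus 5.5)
We will prove the lemma by strong induction on the structure of $T$, processing nodes bottom-up. For a node $u$, the subtree $\Subtree(u)$ together with its labels determines $\Index(u)$ and all relative quantities $h(x)-h(y)$, $\Pos(x)-\Pos(y)$, $h(x)-\Pos(y)$ for $x,y\in\Subtree(u)$. Since every node in $T$ is identified with an element of $V$ and the tree is ordered, the data we may use are the identities, the labels $\ell(\cdot)$, the left-to-right order of children, and the fixed offset sequence $(r_i)$. Because all three kinds of differences are linked by $\Pos(x) = h(x) + r_{\Index(x)}$, it suffices to determine $\Index(x)$ for every $x$ and all differences $h(x)-h(y)$.

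\textbf{Base case.} A leaf $u$ has $|\Subtree(u)|=1$. By \Cref{lem:forest-properties}(\ref{item:collision-property}), $\Index(u)=0$, so $\Pos(u)=h(u)$, and there are no other pairs to consider.

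\textbf{Inductive step.} Let $u$ have ordered children $w_1,\ldots,w_k$ with labels $v_i=\ell(w_i)\in\Subtree(w_i)$, using \Cref{lem:forest-properties}(\ref{item:label-property}). By induction, each child subtree determines the relative data inside $\Subtree(w_i)$. The plan is to simulate the construction loop for $j=0,1,\ldots,\Index(u)-1$, maintaining the positions of all nodes already attached, measured relative to $h(u)$. At step $j$, the probed cell $h(u)+r_j$ holds some element $v$. There are two cases.
\begin{itemize}
\item If $v$ is already a descendant of $u$, no child is created. We can detect this case because we know the positions relative to $h(u)$ of every element already attached, so we can check whether one of them sits at $h(u)+r_j$.
\item Otherwise, the next child $w_i$ in left-to-right order is attached with label $v_i=v$. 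This gives $\Pos(v_i)=h(u)+r_j$. Combined with the inductive data for $\Subtree(w_i)$, it anchors every $h(x)$ and $\Pos(x)$ for $x\in\Subtree(w_i)$ relative to $h(u)$.
\end{itemize}
Thus we simulate deterministically: at each $j$ we test whether $h(u)+r_j$ is occupied by an already-attached descendant. If it is, we skip. If it is not, the probe must have hit a new tree, and that tree is the next unused child.

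\textbf{Termination.} The loop stops at the first $j$ that is neither a known descendant's position nor accounted for by a remaining child. Equivalently, it stops once all $k$ children have been consumed and the next probe $h(u)+r_j$ is not occupied by any element of $\Subtree(u)$ other than $u$ itself. This $j$ is $\Index(u)$, and it gives $\Pos(u)=h(u)+r_{\Index(u)}$.

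\textbf{Main obstacle.} The delicate point is justifying the termination rule. Once all children are used, a later probe at $h(u)+r_j$ might be occupied by an element outside $\Subtree(u)$; then $\Index(u)$ would be larger and that element's tree would have become a child. So it cannot happen, by \Cref{lem:forest-properties}(\ref{item:collision-property}): every element probed before $\Index(u)$ lies in $\Subtree(u)$. After the last child is attached, each further probe before $\Index(u)$ must therefore land on a known descendant, and we skip it. The first probe landing on no element of $\Subtree(u)\setminus\{u\}$ must be $u$'s own slot, because that slot was free at insertion time.

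A related subtlety is the skip test itself. The positions of elements in $\Subtree(u)$ must be known relative to $h(u)$ at the moment we test them. This holds because skipped probes only ever hit elements that were already attached, and the positions of those elements are known by induction plus anchoring.

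Finally, the tree as a whole is handled by applying the argument at its root. Translation invariance modulo $n$ means only relative values are determined, which is exactly what the lemma claims.
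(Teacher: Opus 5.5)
Your proposal is correct and follows essentially the same route as the paper's proof. Both argue by induction on the tree and simulate $\Insert(u)$ with the children taken in left-to-right order. Each label $\ell(w_i)$ anchors $\Subtree(w_i)$ at the first probe that misses all previously attached subtrees, and $\Index(u)$ is the first such probe after the last child. Your termination argument via \Cref{lem:forest-properties}(\ref{item:collision-property}), together with the fact that $\Pos(u)$ was free when $u$ was inserted, just makes explicit what the paper asserts with ``since $u_k$ is the last child.''
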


\begin{proof}
We prove the claim by induction.  
In the base case $T$ 
is a single-node tree
containing $u$, $\Index(u)=0$, 
and all the relative quantities are 0.

In the general case, let the root of $T$ be $u$,
and $u_0,\ldots,u_k$ be its children in left-to-right order.
By induction, the relative $h$-values and $\Pos$-values
are known for any $i\in [0,k]$ and $u',v'\in \Subtree(u_i)$. 
Now let us simulate the operation of \Insert$(u)$.
We probed $A[h(u)]=A[h(u)+r_0]$ and found that it is occupied.  By construction of the witness forest, $h(u) = \Pos(\ell(u_0))$, so now we know the relative position of all elements in $\{u\}\cup \Subtree(u_0)$,
and can ascertain whether 
$h(u)+r_j = \Pos(v')$ for any $j$ and $v'\in \Subtree(u_0)$.
Let $j_0$ be the first index for which
$h(u)+r_{j_0}$ does \emph{not} 
collide with any element of $\Subtree(u_0)$.\footnote{For example, in \Cref{fig:witness-forest},
if we are looking at the tree rooted at $u=13$,
then $j_0 = 3$ since $h(u)+0^2, h(u)+1^2, h(u)+2^2$
are all in the subtree rooted at $u_0 = 9$.
In the case of $u=14$, $j_0 = 2$ since only $h(u)+0^2, h(u)+1^2$ are in the subtree rooted at $u_0 = 6$.}
Thus, it must be the case
that $\Pos(\ell(u_1)) = h(u) + r_{j_0}$, 
and now we know the relative position of all elements in 
$\{u\}\cup \Subtree(u_0)\cup\Subtree(u_1)$.
In this way we can define $j_1,\ldots,j_k$
and identify the collisions at positions
$h(u) + r_{j}$ for all $j<j_k$.  
Since $u_k$ is the last child, it must be that $A[h(u)+r_{j_k}]$ was unoccupied at the time of \Insert$(u)$, 
hence
$\Index(u)=j_k$ and 
$\Pos(u) = h(u)+r_{j_k}$.
\end{proof}

\subsection{Counting Realizable Witness Trees}\label{sect:counting-realizable-witness-trees}

Call a labeled tree $T$ \emph{realizable} 
if it is possible to see $T$ in a witness forest after inserting elements $1,2,\ldots,m+1$.  
Whether a particular tree is realizable depends on the structure of the probe offset sequence $(r_0,r_1,\ldots,r_{n-1})$.
See \Cref{fig:realizable} for some examples of unrealizable witness trees.

\begin{figure}
    \centering
    \includegraphics[width=0.5\linewidth]{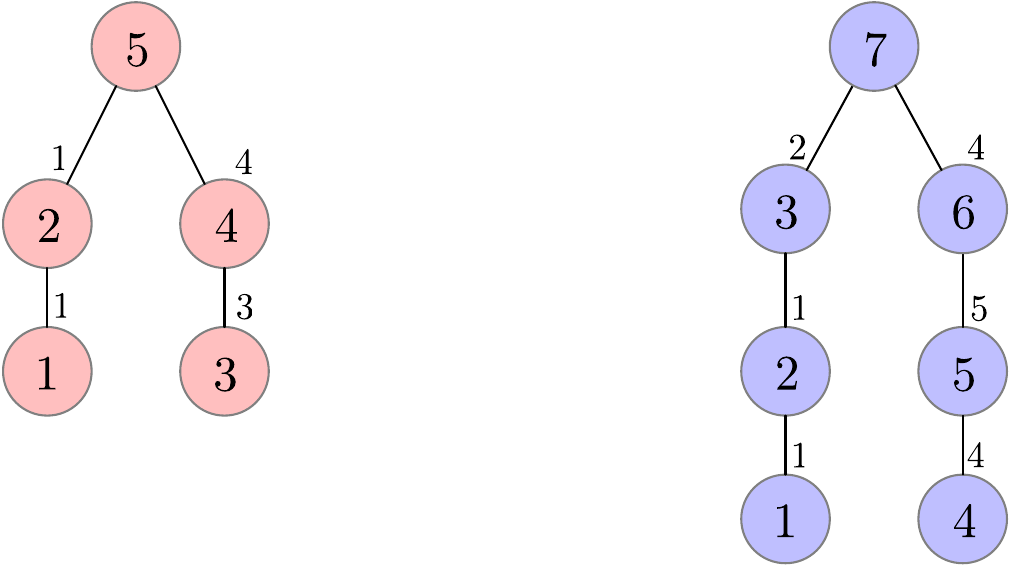}
    \caption{Left: a witness tree that is realizable for quadratic probing, but not linear probing.  Right: a witness tree that is realizable for linear probing, but not quadratic probing.\label{fig:realizable}}
\end{figure}

\begin{lemma}\label{lem:numgenerable}
Fix any subset $V'\subseteq V$ of the keys 
with size $s=|V'|$.  The number of realizable
labeled trees on key-set $V'$ is at most 
$4^{s-1}(s-1)!/s^{3/2}$.
\end{lemma}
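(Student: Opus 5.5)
The plan is to ignore realizability almost entirely and bound the count by the number of \emph{all} labeled ordered trees on $V'$ that satisfy the structural constraints of \Cref{lem:forest-properties}. A realizable labeled tree on $V'$ is specified by three pieces of data, counted in turn:
\begin{enumerate}
    \item a rooted ordered (plane) tree shape on $s$ nodes;
    \item an assignment of the keys of $V'$ to the nodes that is heap-ordered, i.e.\ every strict ancestor carries a larger key, by \Cref{lem:forest-properties}(\ref{item:ancestor-property});
    \item a label $\ell(w)\in \Subtree(w)$ for every non-root node $w$, with the root unlabeled, by \Cref{lem:forest-properties}(\ref{item:label-property}).
\end{enumerate}
Any realizable labeled tree arises from at least one such triple. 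It therefore suffices to bound the number of triples.

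Fix the plane shape and count the pairs (heap-ordered key assignment, labeling) jointly. By the hook length formula for rooted trees (forests), the number of heap-ordered assignments of $s$ distinct keys to a fixed rooted tree is $s!/\prod_{w}|\Subtree(w)|$, the product running over all $s$ nodes. The left-to-right order of children plays no role here, so the formula applies verbatim to plane trees. Given the shape, the number of labelings is $\prod_{w\neq \mathrm{root}}|\Subtree(w)|$, since each non-root $w$ independently picks its label from its own subtree. Multiplying, all factors cancel except the root's subtree size $s$. Hence each plane shape contributes at most
\[
\frac{s!}{\prod_{w}|\Subtree(w)|}\cdot\prod_{w\ne \mathrm{root}}|\Subtree(w)| \;=\; \frac{s!}{s} \;=\; (s-1)!
\]
labeled trees. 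I expect this cancellation to be the heart of the argument. The only point to check carefully is that the subtree sizes in the two products are the same numbers, which holds because both depend only on the shape.

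Finally, the number of plane trees on $s$ nodes is the Catalan number $C_{s-1}=\binom{2s-2}{s-1}/s$. Using the standard bound $\binom{2n}{n}\le 4^n/\sqrt{\pi n}$ with $n=s-1$ gives $C_{s-1}\le 4^{s-1}/(s\sqrt{\pi(s-1)})$. For $s\ge 2$ we have $\pi(s-1)\ge s$, so $C_{s-1}\le 4^{s-1}/s^{3/2}$. The case $s=1$ is trivial, since there is exactly one tree and the bound equals $1$.

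Multiplying the shape count by $(s-1)!$ gives the claimed bound $4^{s-1}(s-1)!/s^{3/2}$. The main thing to double-check is that nothing more than properties (\ref{item:ancestor-property}) and (\ref{item:label-property}) is needed: the counted family is a superset of the realizable trees, which is exactly what the statement requires. The refinement that uses the specific offsets $(r_i)$ to exclude unrealizable trees is deferred to \Cref{sect:quadratic-probing}.
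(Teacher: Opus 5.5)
Your proposal is correct and follows the paper's proof essentially step for step. You fix a plane shape, count heap-ordered key assignments as $s!/\prod_w|\Subtree(w)|$ and labelings as $\prod_{w\neq \text{root}}|\Subtree(w)|$, cancel to $(s-1)!$, and multiply by $C_{s-1}$. Your verification of $C_{s-1}\le 4^{s-1}/s^{3/2}$ via $\binom{2n}{n}\le 4^n/\sqrt{\pi n}$ and $\pi(s-1)\ge s$, together with the separate check at $s=1$, fills in a bound that the paper only asserts.
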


\begin{proof}
Fix an \emph{unlabeled} tree topology $\tau$ 
on $s$ nodes.  
We will count the number of ways to identify
the nodes of $\tau$ with $V'$ that satisfy \Cref{lem:forest-properties}(1), 
then count the number of labelings that satisfy \Cref{lem:forest-properties}(2).  The resulting count will be an overestimate as it includes unrealizable trees for the particular offset sequence $(r_0,\ldots,r_{n-1})$.

Imagine assigning the node identities uniformly at random
from all $s!$ assignments.
An assignment satisfies \Cref{lem:forest-properties} if every node has the maximum value in its subtree, so the probability a random assignment is legal according to \Cref{lem:forest-properties}(1) is $\prod_{v\in \tau} \frac{1}{|\Subtree(v,\tau)|}$.
By \Cref{lem:forest-properties}(2) the label of a non-root must come from its subtree, 
so there are at most 
$s^{-1}\prod_{v\in \tau} |\Subtree(v,\tau)|$ labelings.
Thus, for a fixed tree topology $\tau$ we encounter a pleasant cancellation.  There are
\[
s! \cdot \prod_{v\in \tau} \frac{1}{|\Subtree(v,\tau)|} \cdot \frac{1}{s}\cdot \prod_{v\in \tau} |\Subtree(v,\tau)| = (s-1)!
\]
ways to assign the identities and labels from $V'$.
The number of such tree topologies is 
given by the 
Catalan number 
$C_{s-1} \leq 4^{s-1}/s^{3/2}$.
\end{proof}

\begin{lemma}\label{lem:numseq}
Fix any labeled tree $T$ on element set $V'\subseteq V$, where $|V'|=s$.
Let $\mathcal{H}$ be the set of all
possible hash vectors $(h(1),\ldots,h(m+1))$,
where $|\mathcal{H}|=n^{m+1}$.
At most $n(n-s)^{m+1-s}$ elements of 
$\mathcal{H}$ cause $T$ to appear in the witness forest $F$.
\end{lemma}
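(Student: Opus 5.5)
The plan is to show that the set of hash vectors causing $T$ to appear is contained in a set we can count directly. That set is obtained in two steps. First, fix the hash value of the root of $T$; this leaves $n$ choices. Second, require every key outside $V'$ to avoid a specific set of $s$ table slots determined by that choice. Here "$T$ appears in $F$" means $T$ is one of the trees of the final witness forest, so its root $u$ is a root of $F$.

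First step: use \Cref{lem:relativepos}. Since $T$ appears as a tree of $F$, the lemma implies that the labeled tree $T$ alone determines all relative quantities $h(v)-h(u)$ and $\Pos(v)-h(u) \bmod n$ for $v\in V'$. Hence, once $h(u)$ is fixed (at most $n$ choices), the hash values of all keys in $V'$ are forced. So are the $s$ positions $P=\{\Pos(w) : w\in V'\}$, which are distinct slots because distinct keys occupy distinct cells. Thus the restriction of $h$ to $V'$ ranges over at most $n$ possibilities, and each one comes with a determined $s$-element set $P$.

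Second step: show that every key $v\notin V'$ must satisfy $h(v)\notin P$. Suppose instead that $h(v)=\Pos(w)$ for some $w\in V'=\Subtree(u)$. By \Cref{cor:hashrange}, $u$ and $v$ are related, so either $v\in\Subtree(u)$ or $u\in\Subtree(v)$.
\begin{itemize}
\item The first case contradicts $v\notin V'$.
\item In the second case, $v\ne u$, so $u$ has a strict ancestor and is not a root of $F$. This contradicts $T$ being a tree of $F$, which holds when the forest is considered after all $m+1$ insertions.
\end{itemize}
Hence each of the $m+1-s$ outside keys has at most $n-s$ admissible hash values given the choice of $h(u)$.

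Multiplying gives at most $n\cdot(n-s)^{m+1-s}$ hash vectors. The only delicate point, and the one I expect to need care, is justifying that the relative data really is a function of $T$ alone. In particular, the positions $P$ must depend only on $T$ and $h(u)$, not on the outside keys. \Cref{lem:relativepos} gives exactly this, because its simulation of each \Insert{} uses only the subtrees of the children. With that in place, the counting is an overcount: it is an upper bound on the set of hash vectors yielding $T$, not an exact characterization.
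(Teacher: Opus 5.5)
Your proof is correct and matches the paper's argument: $n$ choices for $h(u)$, then \Cref{lem:relativepos} to fix the hash values and the $s$ distinct positions of $V'$, then \Cref{cor:hashrange} to force every key outside $V'$ to avoid those positions. Your case split, where $u\in\Subtree(v)$ with $v\neq u$ would contradict $u$ being a root of the final forest, usefully makes explicit the rootedness hypothesis that the paper's one-line appeal to \Cref{cor:hashrange} leaves implicit.
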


\begin{proof}
Let $u$ be the root of $T$.  There are $n$ choices for $h(u)$, and by \Cref{lem:relativepos}, 
once $h(u)$ is fixed, 
we know $h(v')$ for all $v'\in V'$.
Moreover, by 
\Cref{cor:hashrange},
for every $v\not\in V'$ we must have
$h(v) \not\in \Pos(V')$.
Thus, there are $(n-s)^{m+1-s}$ choices for the hash values of $V\backslash V'$.
\end{proof}

We are now in a position to prove \Cref{thm:mainthm}.  For $h\in\mathcal{H}$,
let $\Index(x,h)$ and $\Subtree(x,F(h))$
be $\Index(x)$ and $\Subtree(x)$ 
when $h$ is the hash function.

\begin{proof}[Proof of \Cref{thm:mainthm}]
The expected cost of \Insert$(m+1)$ 
can be written as follows.

\begin{align*}
&\frac{1}{n^{m+1}}\sum_{h\in \mathcal{H}}(1+\Index(m+1,h))\\
\intertext{By \Cref{lem:forest-properties}(3), $1+\Index(m+1) \leq |\Subtree(m+1)|$.}
&\leq  \frac{1}{n^{m+1}}\sum_{h\in\mathcal{H}} |\Subtree(m+1,F(h))|\\
\intertext{By \Cref{lem:numseq} we can consider only realizable trees, and count the number of hash functions in $\mathcal{H}$ consistent with each such tree.}
&\leq \frac{1}{n^{m+1}}\sum_{\substack{T \text{ is a realizable}\\ \text{tree rooted at } m+1}} |T|n(n-|T|)^{m+1-|T|}\\
\intertext{\Cref{lem:numgenerable} upper bounds the number of realizable trees on $s$ elements.}
&\leq \frac{1}{n^{m+1}}\sum_{s=1}^{m+1}s \binom{m}{s-1}\frac{4^{s-1}}{s^{3/2}}(s-1)!n(n-s)^{m+1-s}\\
\intertext{Canceling an $ns$, and reindexing from $s=0$, we have}
&\leq \sum_{s=0}^{m}\frac{1}{n^m}\frac{1}{\sqrt{s+1}}\binom{m}{s}4^{s}s!(n-s)^{m-s}.
\end{align*}

Define $A_s=\frac{1}{n^m}\frac{1}{\sqrt{s+1}}\binom{m}{s}4^{s}s!(n-s)^{m-s}$
to be the $s$th term of this sum.  
We may bound it as follows.

\begin{align*}
A_s
&=\frac{1}{n^m}\frac{1}{\sqrt{s+1}}\binom{m}{s}4^{s}s!(n-s)^{m-s}\\
\intertext{By Stirling's approximation $s! < e\sqrt{s}(s/e)^s$ for all $s$.}
&\leq \frac{e}{n^m}\binom{m}{s}\left(\frac{4s}{e}\right)^s(n-s)^{m-s}\\
\intertext{The probability of seeing $s$ successes in $m$ trials is ${m\choose s}p^s(1-p)^{m-s} \leq 1$, so for \emph{any} $p\in(0,1)$, ${m\choose s} \leq (p^s(1-p)^{m-s})^{-1}$.  We choose $p=\frac{s}{m}$.}
&\leq \frac{e}{n^m}\left (p^s(1-p)^{m-s}\right )^{-1}\left(\frac{4s}{e}\right)^s(n-s)^{m-s}\\
\intertext{Recall that $\alpha=m/n$ is the load factor,
so $n-s = n(1-\alpha p)$.}
&=\frac{e}{n^m}p^{-s}(1-p)^{s-m}\left(\frac{4s}{e}\right)^sn^{m-s}(1-\alpha p)^{m-s}\\
\intertext{Note that $\left(\frac{s}{pn}\right)^s = \alpha^s$ and $\frac{1-\alpha p}{1-p} = 1+\frac{p(1-\alpha)}{1-p}$.}
&=e\left(\frac{4\alpha}{e}\right)^s\left (\frac{1-\alpha p}{1-p}\right )^{m-s}\\
&\leq e\left(\frac{4\alpha}{e}\right)^se^{\frac{p(1-\alpha)}{1-p}(m-s)}\\
\intertext{and by definition of $p$, $p(m-s)/(1-p) = s$.}
&=e\left(\frac{4}{e}\alpha e^{1-\alpha}\right)^s.
\end{align*}

Recall that $\alpha^*\approx 0.357403$ is the unique 
solution to $\frac{4}{e}\alpha e^{1-\alpha} = 1$ 
in the range $[0,1]$.
Whenever $\alpha < \alpha^*$, 
the expected insertion cost is 
therefore 
$\sum_{s\geq 0} A_s = O(1)$.
\end{proof}

\section{Improved Load Factor for Quadratic Probing}\label{sect:quadratic-probing}

\Cref{lem:numgenerable} shows that the number of realizable labeled trees $T$ on $s$ elements 
is at most $4^{s-1}(s-1)!/s^{3/2} 
< (\frac{4}{e}s)^s$, 
but this bound is not tight.
In general, whenever we can prove the number
of realizable witness trees of size $s$
is $O((\beta s)^s)$, the analysis of \Cref{sect:counting-realizable-witness-trees}
implies the expected
insertion cost is constant whenever
the load factor is $\alpha<\hat{\alpha}$, 
where $\hat{\alpha}$ is the unique solution to 
$\beta\hat{\alpha} e^{1-\hat{\alpha}} = 1$ in $[0,1]$.

In this section we bound the number of 
witness trees for the quadratic probing sequence,
yielding $\beta \leq 1.42473$ 
and 
$\hat{\alpha} > 0.3761$.
Our method is to exactly calculate the number
of witness trees with $s$ vertices for constant values of $s=O(1)$, then leverage these values 
in a recurrence relation to asymptotically 
improve the count for all $s=\Omega(1)$.

\begin{theorem}\label{thm:mainthm2}
For any load factor $\alpha\leq 0.3761$, with probability $1-\exp(-\Omega(\sqrt{n}))$, the expected insertion time using quadratic probing is constant.
\end{theorem}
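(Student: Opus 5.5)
The plan is to keep the framework of \Cref{sect:counting-realizable-witness-trees} intact and replace only \Cref{lem:numgenerable} with a sharper count that is specific to $r_i=i^2$. Concretely, I will show that the number of realizable witness trees on a fixed key set of size $s$ is $O((\beta s)^s)$ with $\beta\le 1.42473$. Then I will rerun the computation in the proof of \Cref{thm:mainthm} with $4/e$ replaced by $\beta$, so that the terms are bounded by $A_s\le O(1)\cdot(\beta\alpha e^{1-\alpha})^s$. This series converges exactly when $\beta\alpha e^{1-\alpha}<1$, and for $\beta=1.42473$ that holds for all $\alpha\le 0.3761$.

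The statement is only "with probability $1-\exp(-\Omega(\sqrt n))$" because the $r_i$ are not a permutation of $[n]$. I will handle this as in \Cref{cor:quadprobing}. The first $\Omega(\sqrt n)$ offsets are distinct mod $n$, so the witness-forest arguments (\Cref{lem:forest-properties}, \Cref{lem:relativepos}, \Cref{lem:numseq}) remain valid as long as every insertion uses fewer than $c\sqrt n$ probes. The probability that some insertion exceeds this is at most $\sum_{s\ge c\sqrt n}A_s=\exp(-\Omega(\sqrt n))$, by the same geometric tail applied to each of the $m+1$ insertions.

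For the counting, I separate the identification/labeling factor from the topology factor, as in \Cref{lem:numgenerable}. The cancellation there shows that each topology $\tau$ contributes at most $(s-1)!$ identified, labeled trees, and this does not depend on the probe sequence. So the savings must come from restricting the pair (topology, labeling) to those consistent with quadratic offsets. Rather than count topologies and labels separately, I will define a weight $W(T)$ for each realizable labeled tree, normalized so that the total weight over realizable trees of size $s$ is the quantity to bound, and work with the generating function $G(z)=\sum_T W(T)z^{|T|}$. A root with children $u_0,\ldots,u_k$ is realizable only if, when \Insert$(u)$ is simulated as in \Cref{lem:relativepos}, each child subtree $u_i$ absorbs exactly the probes $h(u)+r_j$ for $j_{i-1}<j<j_i$ that land in its positions, and the last probe lands outside. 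This dependence on the relative shapes of the children is what kills many trees; the right tree in \Cref{fig:realizable} is an example.

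I will enumerate all realizable trees exactly, by computer, up to some constant size $s_0$. Then I will write a recursive upper bound: a large tree is a root with an ordered sequence of children. I will classify the children's subtrees by a finite "type" (for example, the small-tree pattern seen near the root), using exact counts for small subtrees and the crude bound for large ones. This yields a functional inequality of the form $G(z)\le z\,\Phi(G(z),z)$. The radius of convergence $\rho$ of the resulting dominating series then gives $\beta=1/(e\rho)$ after the Stirling conversion used in the proof of \Cref{thm:mainthm}.

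I expect the main obstacle to be making the recurrence valid for large trees while still exploiting the small-case structure. The difficulty is that realizability is not local: whether child $u_i$ is legal depends on the positions occupied by all of $u_0,\ldots,u_{i-1}$. My first attempt will be a one-sided relaxation that only enforces constraints among the root and children whose subtrees have size at most $s_0$, treating every larger subtree as unconstrained. This is trivially an overcount, so it is valid, and the exact small counts provide the numerical improvement. I would then tune $s_0$ and check numerically that $\rho$ yields $\beta\le 1.42473$.
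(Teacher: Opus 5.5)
Your plan has the same architecture as the paper's proof. Both enumerate small realizable trees exactly by computer (the paper goes up to $k=15$), feed these counts into a generating-function recursion for larger trees, take $\beta=1/(e\rho)$ from the dominant singularity, and rerun the computation of \Cref{thm:mainthm}.

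Where you differ is the recursion, and the paper's is the coarsest one. It counts \emph{enhanced} trees, whose root label may be any node of the tree. A child subtree together with the label it received on attachment is then again a counted object, and one divides by $s$ at the end to recover ordinary witness trees. For trees with more than $k$ nodes it imposes no constraint at the root at all. That gives $H=z/(1-H)+S$ with $S$ a degree-$k$ correction polynomial. This is solvable in closed form, with $\rho$ the modulus of the nearest zero of $(1-S(z))^2-4z$, and already at $k=15$ it yields $1/(e\rho)<1.42473$.

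Your extra root-level constraints among small children are therefore not needed to reach the stated constant. As phrased, they are also not automatically an overcount. Deleting the large children and checking what remains can reject realizable trees. The reason is that a small child following a large sibling has its label at $h(u)+r_j$, where $j$ depends on how many probes that sibling absorbed. Shifting it changes which positions collide with the other subtrees. A valid version would either enforce only the prefix of small children before the first large one, or quantify existentially over the large siblings' footprints. Either way you get a multi-type system instead of a single quadratic equation.

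Your tail-sum treatment of the $1-\exp(-\Omega(\sqrt n))$ caveat is more explicit than the paper's, which simply asserts it.
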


\Cref{lem:numquadratic} is the key technical 
lemma that implies \Cref{thm:mainthm2}.

\begin{lemma}\label{lem:numquadratic}
Let $r_i=i^2$ be the offset sequence, and $n$ be sufficiently large.
Fix any $V'\subseteq V$ with $|V'|=s$.
The number of realizable labeled witness trees on element set $V'$ 
is $O(\beta^ss^s)$, 
where $\beta\leq 1.42473$.
\end{lemma}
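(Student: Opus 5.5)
We separate the count into two factors, following the proof of \Cref{lem:numgenerable}. For a fixed tree topology $\tau$ on $s$ nodes, the number of ways to assign identities from $V'$ consistent with \Cref{lem:forest-properties}(1) is $s!\prod_{v\in\tau}|\Subtree(v,\tau)|^{-1}$. This factor does not depend on the offset sequence. We keep it and refine only the count of pairs (topology, labeling) that are realizable for $r_i=i^2$. Concretely, define a weight $W(\tau)$ equal to the number of realizable labelings of $\tau$, normalized by $\prod_v|\Subtree(v,\tau)|$. The realizable count on $V'$ is then $s!\sum_\tau W(\tau)/s$. Since $s!\approx (s/e)^s$, it suffices to show $\sum_{|\tau|=s} W(\tau)=O(\gamma^s)$ with $\gamma/e\le \beta=1.42473$, i.e.\ $\gamma\approx 3.873$. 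Note that $\gamma<4$, the Catalan growth rate.

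The first step is a local realizability criterion. By \Cref{lem:relativepos}, a labeled tree determines all relative hash values and positions. We would check that $T$ is realizable if and only if, in the simulation of \Insert$(u)$ at every node $u$, the following hold. Each child $u_i$ with $i\ge 1$ has its label $\ell(u_i)$ placed at the position $h(u)+r_{j_i}$ of the first probe not covered by $\Subtree(u_0)\cup\dots\cup\Subtree(u_{i-1})$. All positions in the combined tree are distinct mod $n$. The final probe $h(u)+r_{j_k}$ is free. For $n$ large and $s=O(1)$, wrap-around cannot occur, because the first $\Omega(\sqrt n)$ offsets are distinct. Realizability is then a finite check on integer offsets, and it is decidable by computer for every tree of size up to some constant $K$.

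The second step is the recurrence. A tree is a root together with an ordered sequence of child subtrees. Forgetting most constraints gives the generating-function identity $T(x)=x/(1-T(x))$, which has growth rate $4$. To improve on this, we decompose every tree into ``blocks'': maximal pieces of size at most $K$ containing the root, each with attached subtrees hanging off at its nodes. We require only that each block be realizable on its own, which is a necessary condition and so still gives an overcount. The normalized subtree products factor across the block boundary up to bounded corrections. We will make these exact by tracking the subtree sizes at the attachment points inside the generating function. Let $B(x,y)$ be the exactly enumerated weight polynomial of realizable blocks, with a variable marking attachment slots. We then obtain $T(x)\le x\,\Phi(T(x))$ coefficientwise for an explicit power series $\Phi$ whose low-order coefficients are the computed counts. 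By the standard implicit-function (Pringsheim) analysis, the radius of convergence $\rho$ of $T$ satisfies $\rho=\tau_0/\Phi(\tau_0)$ with $\tau_0\Phi'(\tau_0)=\Phi(\tau_0)$. This gives $\gamma=1/\rho$, and hence the coefficient bound $O(\gamma^s)$, where the polynomial factor is absorbed.

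The main obstacle is making the truncation sound. Labels, and the positions of labels, are global: they depend on subtrees below a block. The block-local realizability check must therefore use only information that is genuinely determined inside the block. We would first try requiring realizability only of the top $K$ levels, or of the first few children of each node, with labels restricted to those parts. This is a weaker condition but is clearly implied by full realizability. The remaining difficulty is numerical. We must choose $K$ large enough that the computed $\Phi$ pushes $\gamma$ down to $\approx 3.873$, verify the numerics rigorously with interval arithmetic, and then combine $s!\,\gamma^s/s$ with Stirling to get $O(\beta^s s^s)$.
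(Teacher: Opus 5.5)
Your reduction in the first step is fine: $\sum_{|\tau|=s}W(\tau)$ is, up to normalization, the coefficient $[z^s]H(z)$ the paper works with, and realizability of a small tree is a finite check with no wrap-around. The gap is in the second step. You truncate at the \emph{top} of the tree, using blocks that contain the root, and you assert that requiring each block to be ``realizable on its own'' is a necessary condition. It is not. By \Cref{lem:relativepos}, simulating \Insert$(u)$ for a block node $u$ requires the positions of \emph{all} elements of $\Subtree(u_0),\Subtree(u_1),\ldots$, including those in hanging subtrees. The label $\ell(u_i)$ may itself lie in a hanging subtree. Worse, even $\Pos(v)$ for a block node $v$ is not determined by the block, because $\Index(v)$ depends on all of $\Subtree(v)$. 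So a probe of $u$ that is covered by a hanging part of $\Subtree(u_0)$ looks uncovered from inside the block, and a block-local simulation would reject realizable trees. Prefixes of realizable trees are generally not realizable. The same issue breaks the weight factorization: $|\Subtree(v)|$ for block nodes depends on arbitrarily large hanging subtrees, so no ``bounded corrections'' are available. You flag this as the main obstacle but leave it unresolved. Without a sound local condition, no $\Phi$ is defined, and the implicit-function step has nothing to act on.

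The missing idea is to truncate at the \emph{bottom} instead. The only parts of a tree whose geometry is locally determined are complete subtrees. Complete subtrees of a realizable tree are themselves realizable, provided you allow the subtree's root to carry a label from its own subtree; these are the paper's ``enhanced'' trees. This closure property is exactly what fails for top blocks. The paper then proceeds as follows.
\begin{itemize}
\item Enumerate the exact counts $q_1,\ldots,q_k$ of realizable enhanced trees, with $k=15$.
\item Apply the unconstrained recurrence (\ref{eqn:rec-trees}) only for sizes $>k$. This is a valid upper bound, since each child subtree is a realizable enhanced tree and the root label has at most $s$ choices.
\item This yields $H=z/(1-H)+S$, where $S=Q-R$ is a polynomial of degree at most $k$. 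Hence $H=\frac{1}{2}\bigl(1+S-\sqrt{(1-S)^2-4z}\bigr)$ in closed form.
\item Take $\beta$ just above $1/(e\rho)$, where $\rho$ is the smallest simple zero of $(1-S(z))^2-4z$.
\end{itemize}
The savings come from the many small complete subtrees near the leaves.
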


\begin{proof}
Rather than force the root $u$ of a witness tree $T$ to be labeled $\ell(u)=\bot$, 
suppose we let $\ell(u)$ be any 
descendant in $T$.
If we obtain an $O((\beta s)^s)$ 
bound on the number of such ``enhanced'' 
realizable witness trees,
it obviously extends to 
normal witness trees, 
divided by an $s$ factor.
Note that every subtree of an enhanced witness tree is a valid enhanced witness tree,
so we can count their number 
by dynamic programming.

Let $b_s$ be an \emph{upper bound} on the number of realizable enhanced trees with $s$ nodes that, without loss of generality, are named $1,2,\ldots,s$ and 
inserted in that order. 
In the base case $b_1=1$. For $s\geq 2$, we decompose the tree into the root and $t\geq 1$ subtrees having 
sizes $(s_0,s_1,\ldots,s_{t-1})$. The root was inserted last, so it is node $s$.
There are $\binom{s-1}{s_0,s_1,\ldots,s_{t-1}}$ ways to allocate nodes $1,\ldots,s-1$ to the root's subtrees. 
For each subtree, there are at most $b_{s_i}$ possible configurations. Summing over $t$ and $(s_0,s_1,\ldots,s_{t-1})$, we have:

\begin{align}
b_s 
&= s\sum_{t\geq 1}\sum_{s_0+s_1+\cdots +s_{t-1} = s-1,\, s_i\geq 1}\binom{s-1}{s_0,s_1,\ldots,s_{t-1}}\prod_{i=0}^{t-1} b_{s_i}.\nonumber\\
\intertext{Applying the identity 
$\binom{s-1}{s_0,s_1,\ldots,s_{t-1}} = \frac{(s-1)!}{s_0!s_1!\cdots s_{t-1}!}$, we have}
\frac{b_s}{s!}&=\sum_{t\geq 1}\sum_{s_0+s_1+\cdots +s_{t-1}=s-1,\, s_i \geq 1}\prod_{i=0}^{t-1}\frac{b_{s_i}}{s_i!}\label{eqn:rec-trees}
\intertext{Note that (\ref{eqn:rec-trees}) is 
the recurrence for $s\geq 2$, 
while there is just one tree when $s=1$.  
We shall express the recurrence of 
(\ref{eqn:rec-trees}) as a generating 
function 
$G(z) = \sum_{s\geq 1} \frac{b_s}{s!}z^s$.
Recall the standard notation that 
$[z^s]G(z) = b_s/s!$ is the coefficient of $z^s$
in the formal power series representation of $G$.
Being one node, the root contributes $z$ to shift the coefficient vector by 1,
while the choices for the children 
subtrees are captured by
$1 + G(z) + G(z)^2 + \cdots = \frac{1}{1-G(z)}$,
so}
    G(z) &= \frac{z}{1-G(z)}.\nonumber
\intertext{Solving $G(1-G)=z$ we have
$G(z) = \frac{1\pm \sqrt{1-4z}}{2}$.
The constant term in $G$ is 0, 
so the correct branch is:}
G(z)&=\frac{1-\sqrt{1-4z}}{2}.
\intertext{It is known~\cite[p.~738]{Flajolet09} 
that this is the generating 
function for the Catalan numbers $(C_{s-1})$, 
so we can also express $G$ equivalently as follows.}
G(z) &= \sum_{s\geq 1}C_{s-1}z^s.
\end{align}
Therefore it must be that $b_s=C_{s-1}s!$, which matches the results of \Cref{lem:numgenerable}.
However, this bound is not tight as there are witness trees for quadratic probing that 
satisfy~\Cref{lem:forest-properties} 
but are not realizable.  
See \Cref{fig:realizable} 
for one example.
The question is whether the number of realizable witness trees is \emph{significantly} smaller.

\medskip 

In a strict sense, the number of realizable witness trees 
of size $s$ in 
quadratic probing may depend on both $n$ and $s$,
but for $s < \sqrt{n}$ it is independent of $n$.
We shall therefore assume $n$ is 
sufficiently large.
Our strategy is to compute the exact number
of realizable enhanced witness trees for small 
values of $s=O(1)$, then use these in the recursive formulation to find asymptotically better bounds for large values of $s$.

Define $q_1, q_2, \ldots, q_k$ to be the 
exact number of realizable enhanced witness trees of sizes $1$ through $k$, and let $H(z)$ be an exponential generating function whose coefficients up to $k$ agree with $(q_i)_{1\leq i\leq k}$ and has the same structure as $G$ thereafter.
\begin{align}
[z^m]H(z) &=
\begin{cases}
    \dfrac{q_m}{m!}, & \text{when } m\leq k, \\
    [z^m]\dfrac{z}{1-H(z)},\istrut{7}    & \text{otherwise.}
\end{cases}
\intertext{Note that this definition is not circular.  The coefficient of $z^m$, $m>k$,
in $z(1+H(z)+H(z)^2 + \cdots)$ depends only on the previously defined coefficients up to degree $m-1$.
It will be useful to define some auxiliary functions.}
Q(z) &= \sum\limits_{m=1}^k \frac{q_m}{m!}z^m,\nonumber\\
R(z) &= \frac{z}{1-H(z)}\bmod z^{k+1},\nonumber\\
S(z) &= Q(z)-R(z).\nonumber
\end{align}

In other words, $Q$ reproduces just 
the first $k$ coefficients of $H$.
The coefficient $[z^m]R$, $m\leq k$, is 
what the recurrence (\ref{eqn:rec-trees}) 
\emph{would} give if each $b_{s_i}$ were replaced 
with the true count $q_{s_i}$.  The $\bmod z^{k+1}$ truncates all terms past $z^m$, $m>k$, in the power series.
$S$ is the difference between the 
true count and $R$'s upper bound on the true count.  Observe that $S$ also has only $k$ non-zero terms.
Since $[z^m]H(z)=[z^m]\frac{z}{1-H(z)}$ for $m>k$, we can express $H$ in this form, using $S$ to correct the first $k$ terms.

\begin{align*}
H(z) &=\frac{z}{1-H(z)}+S(z)\\
\intertext{and therefore}
0 &= H^2(z)-(1+S(z))H(z)+(S(z)+z)\istrut[3]{0}\\
H(z) &= \frac{1+S(z)\pm \sqrt{(1-S(z))^2 - 4z}}{2}.
\intertext{We have $S(0)=0$ and $H(0)=0$, so the correct branch is:}
H(z) &=\frac{1+S(z)-\sqrt{(1-S(z))^2-4z}}{2}.
\end{align*}

By the Exponential Growth Formula of 
analytic combinatorics~\cite[Page 244]{Flajolet09}, when $\rho$ is the absolute value of the singularity closest to $0$, $[z^m]H(z)\bowtie \left(\frac{1}{\rho}\right)^m$.\footnote{If $(a_m)$ is a sequence, $a_m \bowtie K^m$ is short for 
$\limsup_{m\to \infty} |a_m|^{1/m} = K$.} 
Let $D(z)=(1-S(z))^2-4z$.  
The only part of $H(z)$ that is potentially not analytic is $\sqrt{D(z)}$.
Every simple zero of $D$ is a square-root branch-point singularity. 
In this polynomial, the zero of minimum absolute value is simple.
\begin{equation}\label{eqn:rho}
\rho = \min\{|u| \::\: D(u)=0\neq D'(u)\}.
\end{equation}

In our calculations we took $k=15$, and 
enumerated all realizable enhanced witness trees with $s\leq 15$ vertices, yielding $q_1,\ldots,q_{15}$.\footnote{Note, for example, that there is one realizable witness tree with two nodes, since the root is unlabeled, but $q_2=2$ realizable \emph{enhanced} witness trees of size 2.}
\[
\begin{aligned}
   q_1 &=1 & \qquad\qquad q_6 &= 23970 & \qquad\qquad q_{11} &= 310866916602\\
   q_2 &=2 & q_7 &= 483588 & q_{12} &= 11478130014120\\
   q_3 &= 12 & q_8 &= 11336920 & q_{13} &= 461447509192826\\
   q_4 &= 108 & q_9 &= 305171856 & q_{14} &= 20062834712598860\\
   q_5 &= 1430 & q_{10} &= 9245198940 & q_{15} &= 938048043104586180.
\end{aligned}
\]
We explicitly generate the coefficients of $Q,R,S$, and calculate $\rho$ via (\ref{eqn:rho}).
The limiting bound
$[z^m]H(z)\bowtie \left(\frac{1}{\rho}\right)^m$ implies that for any $s$
and $\epsilon>0$,
$[z^s]H(z)=O((1/\rho+\epsilon)^s)$
and therefore the number of realizable
enhanced witness trees is
$O((1/\rho+\epsilon)^s s!)$,
and the number of realizable witness trees
is $O(\frac{1}{\sqrt{s}}(\frac{1}{e\rho}+\frac{\epsilon}{e})^s s^s)$.  We can take 
$\beta = 1.42473 > \frac{1}{e\rho}$.
Any $\alpha \leq 0.3761$ satisfies 
$\beta \alpha e^{1-\alpha}<1$.
This implies that quadratic probing enjoys constant expected insertion time for load factors up to $37.61\%$.\footnote{The code for generating $q_1,\ldots,q_{15}$ can be found at \url{https://github.com/gyh-20/Counting-Witness-Trees-for-Quadratic-Probing}.}
\end{proof}

\section{Conclusion}\label{sect:conclusion}

In this paper we presented a simple approach 
for analyzing quadratic probing and other 
fixed-offset hashing schemes, 
leading to results for some realistic load factors $>1/3$.  
The main open problem remains the same as it always was: to prove that quadratic probing
has constant expected insertion cost, ideally $O(\epsilon^{-1})$, at load factor 
$\alpha = 1-\epsilon$.

Our analysis introduces slack in three places, 
(1) upper bounding the true cost
$\Index(x)+1$ of inserting $x$
by the size of $\Subtree(x)$,
(2) overcounting the number of 
realizable witness trees,
and
(3) overestimating the probability of seeing a particular witness tree.
The effect of (1) seems to be modest,
and the method of \Cref{sect:quadratic-probing} mitigates the overcounting of unrealizable witness trees.
Our analysis of quadratic probing 
could be improved by improving \Cref{lem:numseq}'s
estimate of the probability of seeing a particular witness tree. Whereas \Cref{lem:numseq} depends only on the size $s$ of the tree, the ``true'' probability must also depend on the structure of the tree.

\bigskip 

\paragraph{AI Acknowledgment}
Some steps in the proofs were suggested by generative AI tools, namely:

\begin{itemize}
\item{Section 3}: 
The upper bound ${m\choose s} \leq (p^s(1-p)^{m-s})^{-1}$,
which holds for all $p\in (0,1)$.

\item{Section 4}: Introducing $Q(z),R(z),S(z)$ to help solve $H(z)$.

\item{Appendix A}: Applying the substitution $T(z)=-\log(1-F(z))$,
and applying Lagrange inversion to 
find $[z^n]F(z)$.
\end{itemize}

\bibliographystyle{alpha}
\bibliography{bibliography}

\appendix

\section{Witness Tree Analysis of Linear Probing}\label{sect:linear-probing}

Recall from \Cref{sect:witness,sect:quadratic-probing} 
that whenever we can bound the number of witness trees
by $O((\beta s)^s)$, this implies the expected
insertion cost is constant whenever
the load factor $\alpha$ satisfies $\alpha<\hat{\alpha}$, 
where $\hat{\alpha}$ is the unique solution to 
$\beta\hat{\alpha} e^{1-\hat{\alpha}} = 1$ in $[0,1]$. 

\medskip 

In this section we show the number of witness trees for linear probing is $(s-1)^{s-1}$, yielding
$\beta=1$ and $\hat{\alpha}=1$, that is,
the insertion cost is constant for any load factor $\alpha < 1$.

\begin{lemma}\label{lem:numlinear}
Consider the case of \emph{linear probing},
which uses offset sequence $r_i=i$.
For any element set $V'\subseteq V$ with $|V'|=s$, the number of realizable trees with elements and labels from $V'$ is $(s-1)^{s-1}$.
\end{lemma}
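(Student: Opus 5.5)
The plan is to characterize realizable witness trees for linear probing exactly, then count them with an exponential generating function. The characterization will replace the overcount of \Cref{lem:numgenerable}.

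\textbf{Structure.} I will prove by induction, following the simulation in the proof of \Cref{lem:relativepos}, that for $r_i=i$ the tree of $u$ occupies a contiguous block of $|\Subtree(u)|$ table positions, with $\Pos(u)$ the rightmost one.

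Let $u_0,\ldots,u_k$ be the children of $u$. Then $h(u)=\Pos(\ell(u_0))$ may be any position in $u_0$'s block. The probe sequence then walks right to the end of that block. The next probed slot must be the leftmost slot of the next child's block, so $\ell(u_1)$ is forced to be the leftmost element of $\Subtree(u_1)$, and likewise for $u_2,\ldots,u_k$. Finally $u$ lands in the first empty slot after $u_k$'s block, so $\Pos(u)$ is the rightmost slot.

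Hence a realizable tree is an ordered tree with these properties:
\begin{itemize}
\item each node exceeds its descendants (\Cref{lem:forest-properties}(1));
\item the first child of each node carries a label chosen freely from its own subtree;
\item every later child's label is determined.
\end{itemize}

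Conversely, every such object is realizable. Lay the blocks out consecutively, place each root right after its children's blocks, and set $h(u)$ to the position of $\ell(u_0)$. Inserting in increasing order then reproduces the tree. Distinct such trees are distinct witness trees, so the count is exact rather than an upper bound.

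\textbf{Counting.} Let $a_s$ be the number of such trees on a fixed $s$-set, with the root unlabeled, and let $F(z)=\sum_{s\ge1}a_s z^s/s!$. The root is the maximum element. The remaining $s-1$ elements are split into an ordered sequence of subtrees, and the first subtree is weighted by its size, since it has $s_0$ label choices. This yields
\[
F'(z)=1+\frac{zF'(z)}{1-F(z)}.
\]

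Following the hint in the acknowledgment, I will substitute $T=-\log(1-F)$. Then $T'=F'/(1-F)$, and the equation becomes $T'=1/(e^{-T}-z)$. Equivalently, $z$ viewed as a function of $T$ satisfies the linear ODE $dz/dT=e^{-T}-z$ with $z(0)=0$. Its solution is $z=Te^{-T}$, so $T=ze^{T}$ is the Cayley tree function.

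Lagrange inversion with $\phi(w)=e^w$ and the target function $1-e^{-w}$ then gives
\[
[z^s]F=\frac1s[w^{s-1}]e^{-w}e^{sw}=\frac{(s-1)^{s-1}}{s\,(s-1)!}.
\]
Multiplying by $s!$ gives $a_s=(s-1)^{s-1}$.

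The main obstacle is the structural step: showing that the only freedom is the first child's label and the interleaving of elements into the blocks, and that conversely every such configuration is realizable. Boundary effects at the wrap-around mod $n$ are irrelevant because $s<n$. I would verify small cases, $s=2$ and $s=3$ (which give $1$ and $4$), against a direct enumeration.
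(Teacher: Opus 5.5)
Your proposal is correct and follows essentially the same route as the paper: the contiguous-block characterization (free label for the first child, forced leftmost labels for later children), the same recurrence and equation $F'=1+zF'/(1-F)$, the substitution $T=-\log(1-F)$ giving $T=ze^{T}$, and Lagrange inversion. The differences are minor. You solve for $T$ by inverting to the linear equation $dz/dT=e^{-T}-z$, whereas the paper recognizes the right-hand side as $(ze^{T})'$. You also argue the converse explicitly (every such tree is realizable), which the paper leaves implicit but which is exactly what makes the count exact rather than an upper bound.
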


\begin{proof}
Consider a linear probing witness tree $T$ 
with root $u$ and children
$\child_0,\ldots,\child_k$.
Observe that for any $i\in [1,k]$ (excluding $i=0$),
we must have 
$\ell(\child_i) = \arg \min_{v\in \Subtree(\child_i)} \Pos(v)$. 
In linear probing, for any element $w$,
$S[w] = \{\Pos(v)\mid v\in \Subtree(w)\}$ 
is always a \emph{contiguous}
segment so the first position
of $S[\child_i]$ probed when inserting $u$
must be the \emph{first} position of the interval.
$S[\child_0]$ is an exception since 
$h(u)$ can be any position in $S[\child_0]$.
Let $\child_0(w)$ denote the first child of a non-leaf $w$.
Thus, we can uniquely identify $T$ by labeling
only $\child_0(w)$, for each $w\in T$.
For a fixed tree topology $\tau$ on element set $V'$,
the number of labelings is therefore
\begin{align*}
\prod_{w\in V', \text{ $w$ not a leaf of $\tau$}} |\Subtree(\child_0(w),\tau)|.
\end{align*}

Let $a_s$ be the number of realizable labeled 
witness trees on $V'=\{1,\ldots,s\}$, 
which were inserted in that order.
In the base case we have $a_1=1$. 
To derive a recurrence for $a_s$, 
we decompose the tree into the root and $t\geq 1$ subtrees
with sizes $s_0,s_1,\ldots,s_{t-1}$. 
The root must be $s$, while 
there are $\binom{s-1}{s_0,s_1,\ldots,s_{t-1}}$ choices 
for how to allocate the remaining elements to the subtrees.
For each subtree, there are $a_{s_i}$ possible configurations,
and for the first subtree we have $s_0$ choices for its label.
Summing over $t$ and $(s_0,s_1,\ldots,s_{t-1})$, 
we have:

\begin{align*}
a_s = \sum_{t\geq 1}\sum_{s_0+s_1+\cdots +s_{t-1}=s-1,\, 
s_i\geq 1}\binom{s-1}{s_0,s_1,\ldots,s_{t-1}}s_0\prod_{i=0}^{t-1} a_{s_i}
\end{align*}

Replacing $\binom{s-1}{s_0,s_1,\ldots,s_{t-1}}$ with 
$\frac{(s-1)!}{s_0!s_1!\cdots s_{t-1}!}$, we have:
\begin{align}\label{eqn:ar-recurrence}
s\frac{a_s}{s!}=\sum_{t\geq 1}\sum_{s_0+s_1+\cdots +s_{t-1}=s-1,\, s_i\geq 1}s_0\prod_{i=0}^{t-1}\frac{a_{s_i}}{s_i!}
\end{align}

Define $F(z)=\sum\limits_{s\geq 1}\frac{a_s}{s!}z^s$ to be 
the exponential generating function, and recall
that whenever $G$ is a formal power series in $z$,
$[z^m]G$ denotes the coefficient of $z^m$ in $G$.
We can now rewrite the recurrence above in terms of $F$. 
Observe that $F'(z)=\sum_{s\geq 1}\frac{a_s}{(s-1)!}z^{s-1}$,
so the left hand side of (\ref{eqn:ar-recurrence}) 
is $[z^s]zF'(z)$.  
The right side is multiplying the generating function for the first (labeled) child, which is $zF'(z)$, 
and the generating function for the sequence of 
zero or more remaining children,
namely $\frac{1}{1-F(z)} = 1+F(z)+F(z)^2+\cdots$.
We assumed $s>1$, 
so the subtrees contribute $s-1\geq 1$ nodes.
There is a single realizable tree when $s=1$.
To account for the root node we shift the coefficient vector by 1 by multiplying by $z$,
hence:

\begin{align}
zF'(z) &= z\left(1+\frac{zF'(z)}{1-F(z)}\right)\nonumber
\intertext{and therefore}
F'(z) &=1+\frac{zF'(z)}{1-F(z)}.\label{eqn:Fprime}
\end{align}

To solve this differential equation, define 
$T(z)=-\log(1-F(z))$, 
so $T'(z)=\frac{F'(z)}{1-F(z)}$,
$F(z)=1-e^{-T(z)}$, 
and
$F'(z)=T'(z)e^{-T(z)}$. 
Equation (\ref{eqn:Fprime}) can now be expressed as:
\begin{align}
T'(z)e^{-T(z)}&=1+zT'(z)\nonumber
\intertext{and therefore}
T'(z) &=e^{T(z)}+zT'(z)e^{T(z)}.\label{eqn:Tprime}
\intertext{The constant term of $T(z)$ is 0.
Observe that $\frac{\mathrm{d}}{\mathrm{d}z}\left(ze^{T(z)}\right)=e^{T(z)}+zT'(z)e^{T(z)}$.
We integrate (\ref{eqn:Tprime}) on both sides and have}
T(z) &= ze^{T(z)}.
\end{align}

$T(z)$ is the Cayley tree function, the generating function 
for rooted unordered labeled trees, 
whose power-series representation is 
$T(z)=\sum\limits_{s\geq 1}\frac{s^{s-1}}{s!}z^s$.\footnote{There are $s^{s-2}$ labeled trees
on $s$ nodes, and $s$ choices for the root, hence $s^{s-1}$ rooted labeled trees.} 
Defining $\phi(u)=ue^{-u}$, we have 
$\phi(T(z))=ze^{T(z)}e^{-T(z)}=z$
and hence $T(z) = \phi^{-1}(z)$. Recall the Lagrange inversion theorem:
\begin{align*}
[z^m]H(T(z))=\frac{1}{m}[z^{m-1}]H'(z)\left(\frac{z}{\phi(z)}\right)^m.
\end{align*}

We have $F(z)=1-e^{-T(z)}$, so take $H(z)=1-e^{-z}$ and $F(z)=H(T(z))$, apply the Lagrange inversion theorem:

\begin{align*}
a_s&=s![z^s](1-e^{-T(z)})\\
&=s!\frac{1}{s}[z^{s-1}](1-e^{-z})'\left(\frac{z}{\phi(z)}\right)^s\\
&=s!\frac{1}{s}[z^{s-1}]e^{(s-1)z}\\
&=s!\frac{1}{s}\frac{(s-1)^{s-1}}{(s-1)!}\\
&=(s-1)^{s-1}.
\end{align*}
\end{proof}

\begin{theorem}\label{thm:linear-probing}
    With load factor 
    $\alpha < 1$ bounded away from 1, the expected insertion cost for 
    hashing with 
    linear probing is constant.
\end{theorem}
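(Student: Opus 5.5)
We reuse the proof of \Cref{thm:mainthm} verbatim, swapping the tree count. As in that proof, the expected cost of \Insert$(m+1)$ is at most
\[
\frac{1}{n^{m+1}}\sum_{\substack{T \text{ realizable}\\ \text{rooted at } m+1}} |T|\,n(n-|T|)^{m+1-|T|}.
\]
This bound uses \Cref{lem:forest-properties}(3) to charge the cost to $|\Subtree(m+1)|$, and \Cref{lem:numseq} to bound the number of hash vectors producing a given tree $T$; neither step depends on the offset sequence.

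We replace \Cref{lem:numgenerable} by \Cref{lem:numlinear}. For a fixed key set $V'$ of size $s$ containing $m+1$, there are at most $(s-1)^{s-1}$ realizable trees. (Any tree must have $m+1$ as its root by \Cref{lem:forest-properties}(1), so the count rooted at $m+1$ is still at most $(s-1)^{s-1}$.) There are $\binom{m}{s-1}$ choices of $V'$. After canceling $ns$ and reindexing $s\to s+1$, the $s$th term is
\[
A_s=\frac{1}{n^m}\binom{m}{s}\,s^{s}\,(n-s-1)^{m-s}\le \frac{1}{n^m}\binom{m}{s}s^s(n-s)^{m-s}.
\]
This is exactly the quantity bounded in the earlier proof, except that the factor $4^s s!/\sqrt{s+1}$ (which gave $(4s/e)^s$ after Stirling) is now replaced by $s^s$. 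So the base is $\beta=1$ instead of $4/e$.

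We now run the same chain of estimates. Take $p=s/m$ and use $\binom{m}{s}\le (p^s(1-p)^{m-s})^{-1}$. With $n-s=n(1-\alpha p)$, we have $(s/(pn))^s=\alpha^s$. The bound $\frac{1-\alpha p}{1-p}\le \exp\!\big(\frac{p(1-\alpha)}{1-p}\big)$ together with $p(m-s)/(1-p)=s$ then gives
\[
A_s\le \left(\alpha e^{1-\alpha}\right)^s .
\]
There is no extra factor of $e$ this time, since we never invoke Stirling. The boundary cases $s=0$ (giving $A_0\le 1$ directly) and $s=m$ (where $p=1$, so we bound $\binom{m}{m}=1$ directly) are handled separately and give the same bound.

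The one analytic point to verify is that $\alpha e^{1-\alpha}<1$ for every $\alpha\in[0,1)$. This holds because $g(\alpha)=\alpha e^{1-\alpha}$ is strictly increasing on $[0,1]$ (as $g'(\alpha)=(1-\alpha)e^{1-\alpha}>0$) and $g(1)=1$. Hence for $\alpha\le 1-\epsilon$ the terms form a geometric series with ratio $g(1-\epsilon)<1$, and $\sum_s A_s\le 1/(1-g(1-\epsilon))=O(1)$, with the constant depending only on $\epsilon$. The main thing to double-check is this last step: near $\alpha=1$ the ratio tends to $1$, so we get $O_\epsilon(1)$ (indeed $g(1-\epsilon)\approx 1-\epsilon^2/2$, giving $O(\epsilon^{-2})$, matching Knuth), which is what the statement asks for.
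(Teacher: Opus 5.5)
Your route is the paper's: the paper gets \Cref{thm:linear-probing} by feeding the $(s-1)^{s-1}$ count of \Cref{lem:numlinear} into the computation from the proof of \Cref{thm:mainthm}, i.e.\ taking $\beta=1$ and hence $\hat\alpha=1$.

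There is one slip to fix. In that computation, ``canceling an $ns$'' works only because the bound of \Cref{lem:numgenerable} has an $s^{3/2}$ in its denominator, which absorbs the weight $|T|=s$. The count $(s-1)^{s-1}$ has no such denominator. So after canceling $n$ and reindexing, the term is $\frac{1}{n^m}(s+1)\binom{m}{s}s^s(n-s-1)^{m-s}$, and your chain of estimates gives $A_s\le (s+1)(\alpha e^{1-\alpha})^s$ rather than $(\alpha e^{1-\alpha})^s$.

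The theorem is unaffected, since $\sum_{s\ge 0}(s+1)g^s=(1-g)^{-2}$ is still $O_\epsilon(1)$ for $g=g(1-\epsilon)<1$. Your closing parenthetical is wrong, though. With $1-g(1-\epsilon)\approx \epsilon^2/2$, this argument yields $O(\epsilon^{-4})$, not Knuth's $O(\epsilon^{-2})$. That claim is not needed for the statement anyway.
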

\end{document}